\documentclass[sigconf]{acmart}
\pdfoutput=1

   
\settopmatter{printacmref=false}
\setcopyright{acmlicensed}
\renewcommand\footnotetextcopyrightpermission[1]{}

\copyrightyear{2020}
\acmYear{2020}


\acmConference[DEBS '20]{The 14th ACM International Conference on Distributed and Event-based Systems}{July 13--17, 2020}{Virtual Event, QC, Canada}
\acmBooktitle{The 14th ACM International Conference on Distributed and Event-based Systems (DEBS '20), July 13--17, 2020, Virtual Event, QC, Canada}

\usepackage[T1]{fontenc}
\usepackage[utf8]{inputenc}
\usepackage{verbatim}
\interdisplaylinepenalty=2500
\usepackage{array}
\usepackage{dblfloatfix}
\usepackage{algorithmic}
\usepackage{graphicx}
\graphicspath{{graphics/}{moregraphics/}}
\DeclareGraphicsExtensions{.pdf,.PDF,.jpeg,.JPEG,.jpg,.JPEG,.png,.PNG}

\usepackage{booktabs}

\usepackage{siunitx}

\usepackage{textcomp}
\usepackage{xcolor}
\usepackage{mymacros}

\usepackage{xspace}
\newcommand{\platform}{\textsc{MODiCuM}\xspace}

\newcommand{\field}[1]{\textnormal{\textit{#1}}}



\begin{document}



\title[Mechanisms for Outsourcing Computation via a Decentralized Market]{Mechanisms for Outsourcing Computation via a\\Decentralized Market}

\author{Scott Eisele}
\affiliation{Vanderbilt University}
\author{Taha Eghtesad}
\affiliation{University of Houston}
\author{Nicholas Troutman}
\affiliation{University of Houston}
\author{Aron Laszka}
\affiliation{University of Houston}
\author{Abhishek Dubey}
\affiliation{Vanderbilt University}





\begin{abstract}
As the number of personal computing and IoT devices grows rapidly, so does the amount of computational power that is available at the edge.
Since many of these devices are often idle, there is a vast amount of computational power that is currently untapped, and which could be used for outsourcing computation. 
Existing solutions for harnessing this power, such as volunteer computing (e.g., BOINC), are centralized platforms in which a single organization or company can control participation and pricing. By contrast, an open market of computational resources, where resource owners and resource users trade directly with each other, could lead to greater participation and more competitive pricing. To provide an open market, we introduce \platform, a decentralized system for outsourcing computation. \platform deters participants from misbehaving---which is a key problem in decentralized systems---by resolving disputes via dedicated mediators and by imposing enforceable fines. However, unlike other decentralized outsourcing solutions, \platform minimizes computational overhead since it does not require global trust in mediation results. We provide analytical results proving that \platform can deter misbehavior, and we evaluate the overhead of \platform using experimental results based on an implementation of our~platform.

\end{abstract}


\ccsdesc[300]{Computer systems organization~Redundancy}
\ccsdesc[100]{Networks~Network reliability}

\keywords{computation outsourcing, 
decentralized market,
blockchain, 
decentralized job scheduling,
smart contract}


\ifFull
\noindent{Published in the proceedings of the 14th ACM International Conference on Distributed and Event Based Systems (DEBS 2020).}
\fi
\ifFull{{\let\newpage\relax\maketitle}}
\else{\maketitle}
\fi

\section{Introduction}
\label{sec:intro}


The number of computing devices---and thus computational power---available at the edge is growing rapidly; this trend is projected to continue in the future~\cite{Analyst2016IoTP}. Many of these are end-user or IoT devices that are often idle since they  were installed for a specific purpose, which they can serve without using their full computational power. 
Our goal is to  harness these untapped computational resources by creating an open market for outsourcing computation to idle devices. Such a market would benefit device owners since they would receive payments for 
computation while incurring negligible costs. \revision{To illustrate, running an AWS Lambda instance with 512MB of memory for \camera{1-hour} costs \$0.03, while the electrical cost of operating a BeagleBone Black\footnote{\revision{\url{https://github.com/beagleboard/beaglebone-black/wiki/System-Reference-Manual}}}
single-board computer with 512MB of memory for an hour is 100 times less. 
}
\camera{Thus, it is feasible that a computation service could be provided economically.}

Prior efforts to leverage these underutilized resources include \emph{volunteer computing} projects, such as BOINC~\cite{anderson_designing_2006} and CMS@Home \cite{field2018cms}, in which users donate the computational resources of their personal devices to be used for scientific computation. Volunteer computing suffers from two limitations that prevent it from broader utility. First, the resources made available by volunteer computing participants are only accessible to specific users and projects. Second, it relies on systems ``volunteering'' their time as it  does not include incentives to provide reliable access to computational resources,  leading to the problem of low participation~\cite{nouman_durrani_volunteer_2014}. 

Participation can be incentivized through the implementation of a competitive market, which facilitates the discovery and allocation of supply and demand for computational resources and tasks. 
The market must provide mechanisms to address misbehavior and resolve any disputes\footnote{Disputes are disagreement between the parties about the correctness of the job execution. They may arise due to
a fault or malicious behavior.} between the participants. Such a market
could be managed by a central organization, as many in the sharing economy are (e.g., Uber, Airbnb). A central organization could mediate disputes. However, a centralized system presents a clear target for attackers, can be a single point of failure, and without competition may charge exorbitant fees. An alternative is to create an open and decentralized  market, where resource owners and resource users trade directly with each other, which could lead to greater participation, more competitive pricing, and improved~reliability.

In distributed computing systems, faults and misbehavior have traditionally been addressed using consensus algorithms. Recently, distributed ledgers  have emerged as a novel mechanism to provide consensus in decentralized public systems~\cite{uriarteBlockchain2018}. 
Smart contracts extend the capabilities of a distributed ledger by enabling ``trustless'' computation on the stored data.
In theory, smart contract implementations, such as the one found in  Ethereum~\cite{underwood2016blockchain,wood2014ethereum}, could be used for outsourcing complex computations. However, since the computation is replicated on thousands of nodes, it becomes costly. To reduce costs, complex computations must be executed off-chain and only result aggregation, validation, and record keeping should be kept on the chain (see \cite{transax} for example).

Prior efforts to construct outsourced computation markets using distributed ledgers
include FogCoin~\cite{fogcoin}, TrueBit~\cite{teutsch2017scalable}, Golem~\cite{Golem}, and iExec~\cite{iExec}. Unfortunately, these existing solutions have varying degrees of inefficiency due to extensive verification of the correctness of the computation performed. 
There have been some efforts to ameliorate this situation.
For example, TrueBit performs computation using a typical computer and relies on the distributed ledger only to complete disputed instructions; however, this approach is still quite inefficient. We discuss  these existing solutions and their drawbacks in detail in Section~\ref{sec:related}.

{\bf Contributions:}
The key problem in implementing a decentralized market is the \emph{efficient} resolution of disputes, which includes determining if the results of outsourced jobs are correct and if resource providers should receive their payments for the jobs.
This paper introduces \emph{Mechanisms for Outsourcing via a Decentralized Computation Market}  (\platform), a distributed-ledger based platform for decentralized computation outsourcing. 
In contrast to other distributed-ledger based solutions (mentioned above), our approach retains computational efficiency 
by minimizing the amount of resources spent on verification through three ideas. 
 First, it relies on \emph{partially trusted mediators}
for settling disputes instead of trying to establish a global consensus on the results of outsourced computation, which  would require extensive duplication of computation.
Second, it \emph{verifies random subsets} of results, which keeps verification costs low while supporting a wide range of jobs.
Third, it deters misbehavior through \emph{rewards and fines}, which are enforced by a distributed-ledger based smart contract. As a result, \platform does not prevent cheating and misuse, but it deters rational agents from misbehavior. The specific contributions of this paper are as~follows:

\begin{enumerate}[noitemsep,leftmargin=*,topsep=0.2em]
\item We introduce a smart contract-based protocol and a platform architecture  for incentivizing the participation of job creators, resource providers, and mediators. 
\item We present an analysis of the protocol by modeling the exchange of resources as a game, and show how we can select the values of various parameters (fines, deposits, and rewards), ensuring that honest participants will not lose, and the advantage of dishonest participants is bounded.
\item We provide a proof-of-concept implementation of the protocol, built on top of Ethereum. Through comparison against AWS lambda we determine that due to the transaction costs currently associated with the main Ethereum blockchain, our implementation is currently suited only for very long running jobs. However, any improvements to Ethereum will benefit our platform. Additionally, our protocol is not limited to the use of a specific platform. Any platform that supports smart contract functionality can be utilized.
\end{enumerate}

{\bf Outline:} We begin by discussing related work in \cref{sec:related}. Then, we introduce \platform in \cref{sec:arch,sec:protocol}. We analyze the protocol in \cref{sec:analysis}. In \cref{sec:implementation}, we describe an implementation of our platform and provide experimental results on its performance. 
Finally, in Section~\ref{sec:concl}, we present concluding remarks.
Implementation is available at \cite{modicum}, and proofs and detailed specification can be found~\FullText{in the appendices.}{in our full paper \cite{modicum-arxiv}.}


\section{Related Work}
\label{sec:related}

In \cite{dong2017betrayal},
the authors determine that for verifying outsourced computation the cryptographic approach is not practical and should instead use repeated executions. The computations however should not be duplicated more than twice. Their strategy is simple: outsource to two providers and compare results. The key challenge then is to prevent collusion. They propose \textit{sabotaging collusion with smart contracts}.
Essentially, they hold the providers accountable using security deposits and a smart contract. They also assume that if two providers intend to collude, the providers also use a smart contract to hold each other accountable. To counter this, the authors propose a third contract that states that the first provider who betrays the other, showing the colluder's contract as proof, is granted immunity and will receive a reward.
In their work, they have not yet considered the scenario when the client may be an adversary. They also do not address the case when the contractors do not need a contract to trust each other.

The authors of \cite{belenkiy2008incentivizing} consider a case where there is a trusted third party that would be responsible for verifying a critical computation, except that it becomes a bottleneck for the rest of the system. It instead becomes a boss and outsources the verification task. To incentivize participation, the boss offers a reward, and to discourage misbehavior the boss requires a security deposit to enable it to enforce fines. The two verification strategies they consider are random double-checking by the boss and hiring multiple contractors to perform a job and comparing results. The authors do not consider the case when the boss is malicious or attempts to avoid paying out the rewards promised. They also do not discuss practical issues such as what if the contractors never return a result.

The verification protocol~\cite{Karwowski2018More} in Golem assumes that tasks are stateless and can be subdivided into sub-tasks, and it spot-checks some subset of those tasks. An example of this type of task is image rendering, which is their initial use case. They claim that they will also support other computational tasks, such as machine learning; however, their verification strategy is not guaranteed to suit such tasks. Their second, more recent protocol~\cite{2019Verification} executes each computation twice and compares the results using a globally trusted oracle that compares the two results. If there is a difference, they request an additional provider to break the tie. They do not mention collusion, which could easily break their verification protocol. Additionally, they suggest that reputation could be used to reduce the required number of repeated executions; however, 
the authors do not provide any discussion. 

In iExec \cite{Croubois2017PoCo}, the number of repeated executions required for verification is determined by a confidence threshold. After a result is computed, the pool scheduler checks if the results submitted achieve the desired level of confidence using Sarmenta's voting \cite{sarmenta2001sabotage}. 
\iflong
The iExec group provides better descriptions of their protocols and include some analysis on how much of the network attackers would need to control to cause, with some probability, an erroneous result to be accepted. They also analyze the profitability of such an attack. 
\fi
The workers register themselves to a scheduler that they choose to trust. If the scheduler breaks trust, the worker can leave to a competing pool. iExec checks tasks for non-determinism before allowing them to be deployed in the network. It does this by executing the task many times. This has the drawback that the task must take no inputs; otherwise, all execution paths would need to be tested. 

In TrueBit \cite{teutsch2017scalable}, verification is provided via Verifiers, which duplicate the computation based on an incentive structure that rewards them for finding errors, and errors are intentionally injected into the system occasionally to guarantee benefits for verification. When a Verifier finds an error, it challenges the resource to a game where they compare the machine state at various points during the execution, in a manner similar to binary search. Once the step where the two solutions deviate is found, the machine state is submitted to a mediator. Rather than using a single entity as a trusted mediator, TrueBit implements the mediation as an Ethereum smart contract which implements a virtual machine. The TrueBit virtual machine executes that specific step and determines which agent is at fault. 
\iflong
Essentially when a task is posted to the platform each resource that wants to solve it submits an Ethereum transaction, the miners choose one to include and that one becomes the assigned resource. The other resources may choose to act as verifiers but it also possible that none do. All programs must be written in Wasm~\cite{WASM} to be compatible with the on-chain mediation.
\fi
The authors recognize that even though TrueBit can theoretically process arbitrarily complex tasks, in practice the verification game is
inefficient for complex tasks. 

In each of these prior works, the authors assume that results must be globally accepted or have universal validity. In  \cite{teutsch2017scalable}, the authors mention that using a trusted mediator is an option for resolving disputes regarding whether a task was done correctly, but such a solution is unacceptable because it does not provide universal validity. We argue however that many tasks---perhaps even the majority---do not require universal validity. For such tasks, only the agent who requested the execution of the task must be convinced of the validity of the result, which means that the systems proposed by prior work incur significant and unnecessary computational overhead for such tasks. Prior works also do not address the possibility that the job creator itself  could manipulate the system by providing tasks that have non-deterministic or environment-dependent~results.



\section{\platform Architecture}
\label{sec:arch}

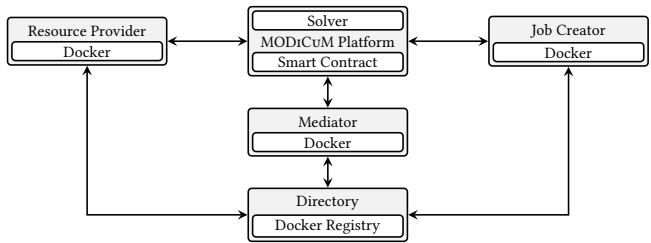
\begin{figure}[t]
    \centering
    {%
    \renewcommand{\arraystretch}{0.0}
     {%
    \begin{tikzpicture}[x=3.2cm,y=1.1cm,
        font=\scriptsize, semithick,
        Actor/.style={matrix of nodes, nodes=Component, row sep=0.05em, inner sep=0.15em, draw, fill=gray!10, minimum width=2cm, minimum height=0.5cm, align=center, rounded corners=0.15em},
        Component/.style={inner sep=0.15em, draw, minimum height=0.25cm, fill=white},
        Tag/.style={fill=none, draw=none},
        Connection/.style={<->, >=stealth}]
      \matrix [Actor] (RP) at (-1,0) {
        |[Tag]|{Resource Provider}\\
        Docker\\
      };
      \matrix [Actor] (JC) at (1,0) {
        |[Tag]|Job Creator\\
        Docker\\
      };
      \matrix [Actor] (Pla) at (0,0) {
        Solver\\
        |[Tag]|\platform Platform\\
        Smart Contract\\
      };
      \matrix [Actor] (Med) at (0,-1.1) {
        |[Tag]|Mediator\\
        Docker\\
      };
      \matrix [Actor] (Dir) at (0,-2.1) {
        |[Tag]|Directory\\
        Docker Registry\\
      };
      \draw [Connection] (RP) -- (Pla);
      \draw [Connection] (JC) -- (Pla);
      \draw [Connection] (Med) -- (Pla);
      \draw [Connection] (RP) |- (Dir);
      \draw [Connection] (JC) |- (Dir);
      \draw [Connection] (Med) -- (Dir);
    \end{tikzpicture}
    }
    }
    \caption{\platform architecture. For ease of presentation, only one instance of Resource Provider, Job Creator, Mediator, Directory, and Solver is shown.}
    \label{fig:arch}
    \vspace{-0.5cm}
\end{figure}

\platform enables the allocation and execution of computational tasks on distributed resources that may be dishonest and may enter or exit the platform at any time. 
Conceptually, this requires resource management, i.e., the allocation of resources to maximize utility. We also need a service for managing job requests, which include both the job specification and the data products related to a job. Finally, we need a service that manages the market, which includes matching jobs to available resources, tracking the provenance of job products, accounting, and handling failures. Due to the decentralization of the system, \emph{job creators} (JC) and \emph{resource providers}  (RP) can have their own strategies; however, the market must operate as a singleton. 
\iflong
Unlike a cloud service, where a single organization manages all aspects of the platform (e.g.,  accounting, log management), in our case all information must be recorded in a way that prevents tampering but allows everyone to audit system history. A distributed ledger is the state-of-the-art solution to meet these requirements. 
Another distinction from cloud services is that resource providers in \platform are independent actors, who may act selfishly and dishonestly. For example, they can advertise more resources than what is available to them, they may  fail to execute jobs correctly, and they may blame failure on the job creator. This necessitates some mechanism for verifying the execution of jobs. 
\fi



To satisfy all of these requirements, we develop an architecture that consists of a distributed-ledger based \emph{Smart Contract}, storage services (\emph{Directory}), allocation services (\emph{Solver}), \emph{Resource Providers}, \emph{Job Creators}, and \emph{Mediators}. These components and actors are shown in Figure~\ref{fig:arch} and described in the following subsections. We start by describing first what a job is in \platform.

\subsection{\platform Jobs}
\label{sec:jobs}

A job is a computing task that takes inputs and produces outputs. The jobs \platform is designed to support are limited to deterministic, batch jobs that are not time critical, and are isolated, meaning they receive no information that allows them to discern which agent (RP, M) is executing them.  We discuss the importance of our limitations in \cref{sec:mediation}. Further the confidentiality of the job cannot be guaranteed since the computation is outsourced to any capable participant.
It is essential that jobs can be executed independently of the host configuration.
The state-of-the-art approach for achieving this is to use a container technology \cite{burns2016borg}. Consequently, we use Docker \cite{merkel2014docker} images to package the jobs in our implementation. 
Docker containers provide an easy-to-use way of measuring and limiting resource consumption and securely running a process by separating the job from the underlying infrastructure. \revision{
To avoid downloading a full image for every job execution, we support Docker layers. To execute an instance of a job, a base layer (i.e., OS or framework), an execution layer with job specific code, and a data layer are required. The base layer can be downloaded a priori, and the other layers are downloaded after a match. If the execution layer has already been downloaded, then only the new data layer is required.}  The job requirements, \revision{including the required resources,} are specified in the offers that the agents~make. \revision{These requirements are used by the resource provider to set the Docker container's resource constraints at runtime.}

The cost of running a job depends on the amount of resources that the job uses. The resources we consider for feasibility are instruction count (computes from CPU speed and time on CPU), disk storage, memory, and bandwidth for downloading the job. The resources we use to compute the price are the instruction count and bandwidth used as reported by the RP where each is multiplied by the RP's asking price. Thus, the price of the job is calculated as~follows:
\begin{gather}
    \begin{aligned}
    \pi_c = & \field{result}.\field{instructionCount} \cdot 
     \field{resourceOffer}.\field{instructionPrice}  \\
    + ~ & \field{result}.\field{bandwidthUsage} \cdot
     \field{resourceOffer}.\field{bandwidthPrice}  \\
    \end{aligned}
    \label{eq:job_price}
    \raisetag{20pt}
\end{gather}


\reviewercomment{such as declaring wrong resource stats, are hard to group into wrong and right schemes and therefore pose a challenge to the implementation}


\revision{
Note that it is often difficult to estimate resource requirements precisely. Repeated executions of the same job result in approximately the same resource usage, but not exactly the same. Therefore,  when constructing an offer, the JC must add margins to the estimated resource requirements.  The upper limit of resource requirements should be set considering the maximum amount of resources that the JC is willing to pay for to account for the expected variance. If the JC did not provide this leeway, jobs would often exceed their resource allotments, forcing the JC to often pay for the execution of failed jobs.} \se{I think the easiest thing to do is just add this discussion about estimating resources, it addresses the reviewer comment. I also added a couple lines to the data faults in section 4.4 about the JC detecting RP resource claims. Changing to just pay the fixed max price would require dropping $\hat\pi_c$ and seems error prone.}

\subsection{\platform Actors}

\subsubsection{Resource Provider and Job Creator}
\emph{Job Creators} (JC) have jobs to outsource and are willing to pay for computational resources. \emph{Resource Providers} (RP) have available resources and are willing to let Job Creators use their hardware and electricity in exchange for monetary compensation.
JCs post offers to the market specifying the jobs, quantities of resources required, deadlines for execution, required Docker base layers, computation architecture, and unit bid prices for resources; while RPs post offers specifying available resources, Docker base layers, computation architecture, and unit ask prices for resources. To prevent the JC and RP from cheating, they are required to include security deposits (see \cref{eq:deposit}) in the offers submitted to the platform. We provide more details on the costs and deposits later in \cref{sec:analysis}. JCs and RPs also specify trusted Directories and Mediators, which we describe below. 

\revision{A JC has multiple strategies for verifying the correctness of the results returned by an RP. Any reasonable verification strategy must cost less than what it would cost to simply execute the job%
. A straightforward verification strategy is to re-execute a random subset of jobs and compare the results with those provided by the RP. As long as the number of verified jobs  is low enough, this strategy is viable. If the JC cannot execute the job itself, then it can post duplicate job offers instead and compare the results provided by different RPs. For some jobs,
there exist verification algorithms that cost significantly less than execution,  and so the JC may be able to verify every job. However, this requires the JC to implement an efficient algorithm for verification, which might be challenging.}

\subsubsection{Directory}
\emph{Directories}
are network storage services that are available to both JCs and RPs for transferring jobs and job results. 
Directories are partially trusted by the actors, which means that actors (RPs and JCs) choose to trust certain Directories, but these are not necessarily trusted by the platform or by other RPs or JCs. Directories are paid by the JC and RP for making its services available for the duration of a job.



\subsubsection{Solver}
Matching
 a JC offer, and an RP offer requires computations that cannot be executed on a smart contract which has limited computation capabilities. Therefore, we extend the concept of hybrid solvers introduced in \cite{transax} and include \emph{Solvers} in \platform. 
A Solver can be a standalone service, or it may be implemented by another actor (e.g., RPs and JCs may act as Solvers for their own offers).
Unlike the smart contract, Solvers are not running on a trustworthy platform; hence, the contract has to check the feasibility of matches that the Solvers provide which is significantly easier computationally than finding matches. Solvers receive a fixed payment, set by the platform and paid by the JC and RP, for finding a match that is accepted by the platform.

\subsubsection{Mediator}

When there is a disagreement between a JC and an RP on the correctness of a job description or a job result (e.g., the RP claims that a job result is correct, while the JC claims that it is incorrect), a  partially trusted \emph{Mediator}  decides who is at fault or ``cheating.'' A Mediator is capable of executing the job in the same way  as the RP, but it can be more expensive since it is expected to provide a more reliable service and to maintain its reputation in the ecosystem. Each Mediator sets a price which it is paid by the JC and RP for making its services available for the duration of a job. In case of mediation it is additionally compensated for the computations it executes.

\subsubsection{Smart Contract}
\label{sec:smartcontract}

The \emph{Smart Contract} (SC) is the cornerstone of our framework.
Most communication in \platform is effectuated through function calls to the SC and through events emitted~\footnote{\revision{\emph{Emitted} events in platforms such as Ethereum are recorded to the transaction logs of the ledger, which can be accessed by interested agents via polling. We use the word \emph{emit} because that is word used for this functionality in Solidity, which we use for our proof-of-concept implementation.}}  by the SC. The SC is deployed and executed on a trustworthy decentralized platform, like the Ethereum blockchain~\cite{wood2014ethereum},
\iflong
\footnote{In \platform, we use Ethereum as the distributed ledger; and for our experiments, we set up a private Ethereum network.}
\fi
which enables it to enforce the rules of the \platform protocol described in the next section.
It also enables actors to make financial deposits and to withdraw funds on conditions set by the SC. The functions provided by the smart contract can be found
\FullText{in \cref{app:SCFunc}.}{in our full paper \cite{modicum-arxiv}.}


\begin{figure*} [ht]
\centering
\subfloat[]{
 \begin{minipage}[b]{0.69\textwidth}
   \resizebox{1.\linewidth}{!}{
\begin{tikzpicture}[x=6.5cm, y=-0.063cm, semithick, font=\small,
Event/.style={blue, dashed}]
\def\posJC{0}
\def\posSolver{0.5}
\def\posPlatform{1}
\def\posDirectory{1.5}
\def\posRP{2}
\def\posM{2.4}

\foreach \name/\pos in {JobCreator/\posJC, ResourceProvider/\posRP, Smart Contract/\posPlatform, Solver/\posSolver, Directory/\posDirectory, Mediator/\posM} {
  \draw (\pos, 0) -- (\pos, 210);
  \node [draw, fill=white, rounded corners=0.1cm] at (\pos, 0) {\name};
}
\newcounter{seqTime}
\setcounter{seqTime}{0}
\foreach \action/\from/\to/\style/\offset/\delta in {
  {registerMediator/\posM/\posPlatform/solid/0/15},
  {MediatorRegistered/\posPlatform/\posJC/Event/0/7},
  {/\posPlatform/\posSolver/Event/0/0},
  {MediatorRegistered/\posPlatform/\posRP/Event/0/0},
  {mediatorAddFirstLayer/\posM/\posPlatform/solid/3ex/7},
  {MediatorAddedFirstLayer/\posPlatform/\posSolver/Event/-3ex/0},
  {registerJobCreator/\posJC/\posPlatform/solid/0/7},
  {registerResourceProvider/\posRP/\posPlatform/solid/0/0},
  {ResourceProviderRegistered/\posPlatform/\posSolver/Event/0/7},
  {JobCreatorRegistered/\posPlatform/\posSolver/Event/0/7},
  {jobCreatorAddTrustedMediator/\posJC/\posPlatform/solid/0/7},
  {resourceProviderAddTrustedMediator/\posRP/\posPlatform/solid/0/0},
  {ResourceProviderAddedTrustedMediator/\posPlatform/\posSolver/Event/0/7},
  {JobCreatorAddedTrustedMediator/\posPlatform/\posSolver/Event/0/7},
  {resourceProviderAddFirstLayer/\posRP/\posPlatform/solid/6ex/7},
  {ResourceProviderAddedFirstLayer/\posPlatform/\posSolver/Event/-6ex/0},
  {uploadJob/\posJC/\posDirectory/set1-green/0/7},
  {postJobOffer/\posJC/\posPlatform/solid/0/7},
  {postResOffer/\posRP/\posPlatform/solid/0/0},
  {ResourceOfferPosted/\posPlatform/\posSolver/Event/0/7},
  {JobOfferPosted/\posPlatform/\posSolver/Event/0/7},
  {JobOfferPosted/\posPlatform/\posRP/Event/0/0},
  {cancelJobOffer/\posJC/\posPlatform/gray/0/7},
  {cancelResOffer/\posRP/\posPlatform/gray/0/0},
  {JobOfferCanceled/\posPlatform/\posSolver/Event/0/7},
  {ResourceOfferCanceled/\posPlatform/\posSolver/Event/0/7},
  {postMatch/\posSolver/\posPlatform/solid/0/7},
  {Matched/\posPlatform/\posRP/Event/0/7},
  {Matched/\posPlatform/\posJC/Event/0/0},
  {timeout/\posJC/\posPlatform/gray/0/7},
  {getJobImage/\posRP/\posDirectory/set1-green/0/0},
  {uploadResult/\posRP/\posDirectory/set1-green/0/7},
  {postResult/\posRP/\posPlatform/solid/0/7},
  {ResultPosted/\posPlatform/\posJC/Event/0/0},
  {getResult/\posJC/\posDirectory/set1-green/0/7},
  {acceptResult/\posJC/\posPlatform/solid/0/7},
  {acceptResult/\posRP/\posPlatform/gray/0/1},
  {rejectResult/\posJC/\posPlatform/red/0/7},
  {JobAssignedForMediation/\posPlatform/\posM/Event/0/0},
  {getJobImage/\posM/\posDirectory/set1-green/0/7},
  {postMediationResult/\posM/\posPlatform/red/0/7},
  {MediationResultPosted/\posPlatform/\posJC/Event/0/0},
  {MatchClosed/\posPlatform/\posJC/Event/0/7},
  {MatchClosed/\posPlatform/\posRP/Event/0/0}%
} {
  \addtocounter{seqTime}{\delta}
  \draw [->, >=stealth, shorten <=0.05cm, shorten >=0.05cm, \style] (\from, \value{seqTime}) -- (\to, \value{seqTime}) node [midway, above, xshift=\offset, align=left, text width={abs(\to - \from) * 5.1cm}, fill=white, fill opacity=0.5, text opacity=1] {\footnotesize\texttt{\scriptsize\action}};
  \ifthenelse{\equal{\delta}{0}}{}{\draw[fill=black] (\from, \value{seqTime}) circle (1pt) node {};}
}

\draw [decorate,decoration={brace,amplitude=10pt,mirror},xshift=-2pt,yshift=0pt]
(0,10) -- (0,79) node [black,midway,xshift=-0.6cm,rotate=90] 
{\footnotesize Registration};

\draw [decorate,decoration={brace,amplitude=10pt,mirror},xshift=-2pt,yshift=0pt]
(0,81) -- (0,130) node [black,midway,xshift=-0.6cm,rotate=90] 
{\footnotesize Offers};

\draw [decorate,decoration={brace,amplitude=10pt,mirror},xshift=-2pt,yshift=0pt]
(0,131) -- (0,142) node [black,midway,xshift=-0.6cm,rotate=90] 
{\footnotesize Matching};

\draw [decorate,decoration={brace,amplitude=10pt,mirror},xshift=-2pt,yshift=0pt]
(0,143) -- (0,182) node [black,midway,xshift=-0.8cm,rotate=90, text width=2cm,align=center] 
{\footnotesize Execution and Verification};

\draw [decorate,decoration={brace,amplitude=10pt,mirror},xshift=-2pt,yshift=0pt]
(0,182) -- (0,210) node [black,midway,xshift=-0.6cm,rotate=90, text width=2cm,align=center] 
{\footnotesize  Mediation};

\end{tikzpicture}
}
\end{minipage}
\label{fig:seqdiag}
}
\subfloat[]{
 \begin{minipage}[b]{.3\textwidth}
 \includegraphics[width=1.06\linewidth, viewport=2cm 9.5cm 14.5cm 29.5cm,clip] {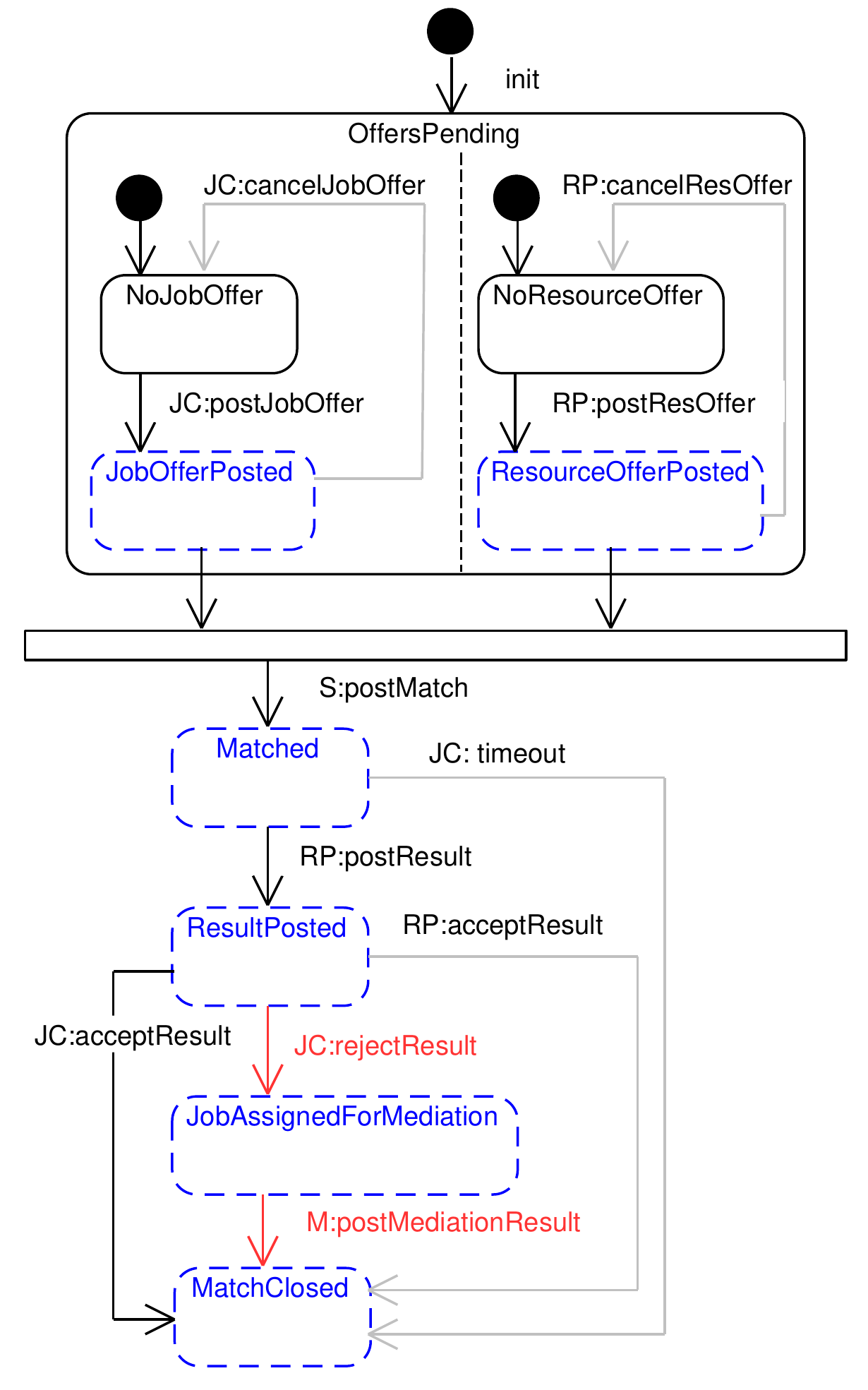}
   \end{minipage}  
   \label{fig:FSM}
   }
  \caption{(a) Sequence diagram showing the outsourcing of a single job. Black arrows are function calls to the smart contract. Blue dashed lines are events emitted by the smart contract. Gray lines are optional function calls. Red lines are optional calls that are required in case of disagreement between RP and JC. Green lines are off-chain communication. Note that events are broadcast, and visible to all agents that can interact with the contract. (b) States of job in \platform. Function calls to the smart contract are prefixed by the actors that make the calls. 
  }
  \vspace{-0.1in}
\end{figure*}

   

\section{\platform Protocol}
\label{sec:protocol}

In this section, we discuss the operation protocol, possible misbehavior, the concept of mediation, and various faults that can occur  and how they can be handled.
\cref{fig:seqdiag}  shows a possible activity sequence from registration to completion of a job. \cref{fig:FSM}  represents the state of a job from the perspective of the smart contract.

\subsection{Registration and Posting Offers}
\label{sec:offers}
First, RPs and JCs register themselves with \platform (\cref{fig:seqdiag}: \texttt{registerResourceProvider}, \texttt{registerJobCreator}). 
Note that RPs and JCs need to register only once, and then they can make any number of offers.
If an RP is interested in accepting jobs, it will send a resource offer to the platform
(\cref{fig:seqdiag,fig:FSM}:~\texttt{postResOffer}). On the other side, a JC first creates a job and uploads the job to a Directory that it trusts and can use (\cref{fig:seqdiag}: \texttt{uploadJob}). Then, it posts a job offer
(\cref{fig:seqdiag,fig:FSM}: \texttt{postJobOffer}).  

Note that any time before the offers are matched, RPs and JCs can cancel their offers. Cancellation can be due to  unscheduled maintenance, or because their offers have not been matched for a long time and they wish to adjust their offer (e.g., increase maximum price) by cancelling the previous offer (\cref{fig:seqdiag,fig:FSM}: \texttt{cancelJobOffer} and \texttt{cancelResOffer}) and posting a new one. \revision{Once matched, cancellation is no longer permitted.}

\subsection{Matching Offers}
After receiving offers, the smart contract notifies the Solvers
by emitting events (\cref{fig:seqdiag,fig:FSM}: \texttt{JobOfferPosted}, \texttt{ResourceOffer\-Posted}). The Solvers then try to match new offers to previously unmatched offers or store them so they can be matched later if no matching candidate is available. When a match is found by a Solver, the Solver posts the match consisting of the resource offer, job offer, and the Mediator to the smart contract (\cref{fig:seqdiag,fig:FSM}: \texttt{postMatch}). The smart contract checks that the submitted match is feasible, and if it is, then it records it and notifies both the JC and RP that their offers have been matched (\cref{fig:seqdiag,fig:FSM}: \texttt{Matched}). 
Note that for a match to be feasible the resources specified in the RP offer must satisfy the requirements of the job specified by the JC offer. Additionally, they should have a common architecture, trusted mediator, and directory.  
\FullText{The full feasibility specification can be found in \cref{app:feasible}.}
{The full feasibility specification can be found in our full paper \cite{modicum-arxiv}.}
With the \texttt{Matched} event, the contract pays the Solver the amount RP and JC specified as the matching incentive.

\iflong
and 
\begin{align*}
\exists\, M: ~ &M \in (JC.\textnormal{\textit{trusted\_mediators}} \cap RP.\textnormal{\textit{trusted\_mediators}}) \nonumber \\ 
& \wedge M.\textnormal{\textit{architecture}} = RP.\textnormal{\textit{architecture}} \\
& \wedge JO.\textnormal{\textit{directory}} \in M.\textnormal{\textit{trusted\_directories}} 
\end{align*}

the following conditions:
\begin{align*}
RO.\textnormal{\textit{instruction\_capacity}} &\geq JO.\textnormal{\textit{instructions\_limit}}  \\
RO.\field{ram\_capacity} &\geq JO.\field{ram\_limit} \\
RO.\textnormal{\textit{local\_storage\_capacity}} &\geq JO.\textnormal{\textit{local\_storage\_limit}}  \\
RO.\textnormal{\textit{bandwidth\_capacity}} &\geq JO.\textnormal{\textit{bandwidth\_limit}} \\
RO.\textnormal{\textit{instruction\_price}} &\leq JO.\textnormal{\textit{instruction\_max\_price}}  \\
RO.\textnormal{\textit{bandwidth\_price}} &\leq JO.\textnormal{\textit{max\_bandwidth\_price}}  \\
JO.\textnormal{\textit{architecture}} &= RP.\textnormal{\textit{architecture}} \\
JO.\textnormal{\textit{directory}} &\in RP.\textnormal{\textit{trusted\_directories}}
\end{align*}
\vspace{-2em}
\begin{align*}
\exists\, M: ~ &M \in (JC.\textnormal{\textit{trusted\_mediators}} \cap RP.\textnormal{\textit{trusted\_mediators}}) \nonumber \\ 
& \wedge M.\textnormal{\textit{architecture}} = RP.\textnormal{\textit{architecture}} \\
& \wedge JO.\textnormal{\textit{directory}} \in M.\textnormal{\textit{trusted\_directories}} 
\end{align*}
\vspace{-2em}

\begin{align*}
    &\textnormal{\textit{current\_time}}  + RP.\textnormal{\textit{time\_per\_instruction}} \cdot JO.\textnormal{\textit{instruction\_limit}} \nonumber \\ 
    & \leq JO.\field{completion\_deadline}
\end{align*}
and both the JC and RP made sufficient deposits (see \cref{sec:analysis}). 


At this point the price of the job can be calculated as follows:
\begin{align}
    & \field{result}.\field{instruction\_count} \cdot 
     \field{RO}.\field{instruction\_price} \nonumber \\
    + ~ & \field{result}.\field{bandwidth\_usage} \cdot
     \field{RO}.\field{bandwidth\_price} 
    \label{eq:job_price}
\end{align}
\fi

 \subsection{Execution by RP and Result Verification by the JC}
 After receiving notification of a match, the RP downloads the job from the Directory (\cref{fig:seqdiag}: \texttt{getJobImage}) and runs the job. While running the job, RP measures resource usage. Finally, when the job is done, it uploads results to the Directory (\cref{fig:seqdiag}: \texttt{uploadResult}) and reports the status of the job and resource usage measurements to the smart contract (\cref{fig:seqdiag,fig:FSM}: \texttt{postResult}). The status of the job is the state which the job execution finished. Some possible termination states include \texttt{Completed} or \texttt{MemoryExceeded}. 
\FullText{The full list of status codes can be found in \cref{app:result}}{The full list of status codes can be found in our full paper \cite{modicum-arxiv}}
and we will discuss some of them in \cref{sec:mediation}.

After receiving the notification that the result has been posted (\cref{fig:seqdiag,fig:FSM}: \texttt{ResultPosted}) the JC downloads the result (\cref{fig:seqdiag}: \texttt{getResult}) and decides whether to verify it or not. It then accepts, ignores, or rejects the result. If the JC accepts the result (\cref{fig:seqdiag,fig:FSM}: \texttt{acceptResult}) 
the contract returns deposits, pays the RP and Mediator, and closes the match (\cref{fig:FSM}: \texttt{MatchClosed}).

If the JC ignores the result, then after some time, the window for the JC to react closes and the RP is permitted to accept the result (\cref{fig:FSM}: \texttt{RP:acceptResult}) resulting in the match closing (\cref{fig:FSM}: \texttt{MatchClosed}). If the JC disagrees with the result, it rejects the result (\cref{fig:seqdiag,fig:FSM}: \texttt{rejectResult}) with a reason code such as \texttt{WrongResults} or \texttt{ResultNotFound}; then, mediation follows (\cref{fig:seqdiag,fig:FSM}: \texttt{JobAssignedForMediation}). 


\subsection{Faults and Mediation}
\label{sec:mediation}

There are essentially two types of faults that can occur in the system: \begin{enumerate*}
    \item Connectivity: this can occur for the JC, RP, and the Mediator when they try to communicate with the Directory or the smart contract. Note that the JC, RP, and Mediator do not talk directly (see \cref{fig:arch}). Solver connectivity faults are not a concern since it only interacts with the smart contract and the failure of a Solver implies that Solver does not submit a solution, but others may. Hence, we only worry about the connectivity to smart contract and the directory.
    \item Data (job input and results of execution): it can be malformed, return an exception when executed, not be available on the Directory, be verifiably incorrect, etc.
\end{enumerate*}{}

First, we discuss the connectivity with the smart contract.  If the RP cannot communicate with the smart contract, it may not receive a notification that it has been matched and 
it is also unable to call \texttt{postResult}. These are both addressed by the JC having a timeout. If the JC calls \texttt{timeout} (see \cref{fig:seqdiag,fig:FSM}) and the required waiting period has elapsed, then the smart contract pays the JC the estimated value of the job from the RPs deposit and returns the remainder. The timeout also addresses when the JC misses the \texttt{ResultPosted} message sent by the RP, since if the misses the message, it will attempt to timeout, which will fail because the result was posted, and  then it will fetch the result. Smart contract connectivity failure also means that the JC cannot send \texttt{acceptResult} and that the RP never receives a notification that the JC accepted. This is addressed by allowing the RP to bypass the JC and call \texttt{acceptResult} if the platform specified duration for the JC to respond has elapsed. Connectivity failure also means that the JC and the RP may not receive the \texttt{MediationResultPosted} message. In this case, they may respond by removing that Mediator from their trusted list; and if the result eventually arrives, they can re-add the Mediator. This also addresses when the Mediator does not get the mediation request or cannot post the mediation result. %
\revision{
  In this case, mediation is considered to have failed; to release the security deposits, we enforce a timeout via the smart contract. In the event of this timeout, we instead pay the RP half of the JC's job estimate and return the remainder of the deposits.
  Obviously, the Mediator does not get paid since mediation failed. 
 }


Now, we discuss connectivity with the Directory. If a JC cannot connect to the Directory to submit a job, it simply tries to upload to another Directory it trusts. If it cannot connect to retrieve a result then the JC must either pay the RP or request mediation with the \texttt{DirectoryUnavailable} status. 
To verify the JC claim, the Mediator queries the Directory for its uptime. If the Directory reports that it was available for the entire job duration, then the Mediator assigns fault to the JC; otherwise, it assigns fault to the Directory.  
If the JC does not agree, it may remove the Directory and/or the Mediator from its trusted lists. Similarly, if the RP cannot connect to the Directory, either to fetch the job or upload the results, then it posts a result with the \texttt{DirectoryUnavailable} status. If the JC does not agree that the Directory is unavailable, then it requests mediation which proceeds as above with the RP rather than the JC. If the Mediator sends a mediation result of \texttt{DirectoryUnavailable} then the RP and JC may choose to remove the Directory, Mediator, or both from their trusted list. 

Finally, we discuss the data faults. The data faults that the RP can detect and the corresponding result status are: 
1) no job on the Directory (\texttt{JobNotFound}), 2) a job description error (\texttt{JobDescription\-Error}), 3) excessive resource consumption (\texttt{ResourceExceeded}), and 4) an execution exception during execution (\texttt{Exception\-Occur\-red}). The JC can request mediation if it disagrees with a claim of any of these faults, in addition to detecting no result on the Directory, or that the result is incorrect. \revision{Finally, the JC can request mediation if, after verification, the JC suspects that the resource usage claimed by the RP is too high. This could occur since the resource variation should be small, but the JC may have set a high resource limit to ensure the job completes.} \se{@Aron: Please check this adjustment}

If the JC requests mediation claiming there is a data fault, then the Mediator attempts to replicate the steps taken by the RP, with the distinction being that it re-executes the job $n$ times (see \cref{subsec:game_model}, it is defined as a parameter of the smart contract), and compares its results with the RP's results. In two cases, the JC will be at fault:
\begin{enumerate*}
    \item All of the Mediator's results and RP's result are the same, which means that the RP has executed the job correctly.
    \item The Mediator gets two different results when running the job, which means that JC has submitted a non-deterministic job.
\end{enumerate*}
 Otherwise, the Mediator assumes that the RP has submitted a wrong result. Another case is when the JC claims that the result is not on the Directory. In that instance, the Mediator attempts to retrieve the result from the Directory. If it cannot it faults the RP, if it can it faults the JC. If either agent disagrees it may remove that Directory, Mediator, or both from its trusted list. 
 
 
 
The Mediator submits the verdict to the smart contract (\cref{fig:seqdiag,fig:FSM}: \texttt{postMediationResult}), and the  smart contract claims the security deposit. Of the deposit, \emph{the actual job price} is used to compensate the damaged party for its losses, and  $\pi_m$ (which is the job price times the number of repeated executions $n$) goes to the Mediator to cover its mediation costs. In addition, Mediators always receive $\pi_a$ as payment for making their service available. They receive this when a job is closed (\cref{fig:seqdiag,fig:FSM}: \texttt{MatchClosed}). 
 


For this mediation approach to work, the RP must not allow jobs to 
access any extra information (e.g., physical location, time) beyond what is in its description and Docker image. 
Otherwise, a job could determine where it is running 
(e.g., via connecting to a remote server), 
so the JC could create a job that would always produce different results on the RP and on the Mediator. For such a job, the Mediator would always incorrectly blame and punish the RP. This is why the platform requires that jobs be 
\begin{enumerate*}
    \item deterministic and punishes the JC if non-determinism is detected (otherwise, we could not use repeated executions to verify) and
    \item batch (otherwise, the jobs could not be isolated).
\end{enumerate*}
\camera{The deterministic restriction specifically requires that the Mediator gets the same result as the RP. This means that in practice non-deterministic jobs could be sent by the JC as long as the RP records the non-deterministic values instantiated on the RP (which are not part of the input posted on the Directory) as part of its result for use in verification. Some examples of jobs that could be computed in such environment are machine learning tasks, or jobs that are similar to volunteer computing tasks, such as protein folding. Essentially, any batch data processing task is feasible.}




\subsubsection{Collusion}
\revision{A part of the challenge in designing a fair system is the problem of collusion. We enumerate all possible two-party collusions and discuss their objectives and how \platform addresses them. We do not consider more than two party collusions explicitly because they are indistinguishable from two-party collusions  for the non-colluding agents  in \platform.  

    \begin{itemize}[noitemsep]
        \item {\it Job Creator and Solver}:   Since offers are public and any participant can act as a solver, the collusion between JCs and solvers is inevitable. The goal of this collusion is to match a JC's offers to the resource offers with the lowest unit price. This means that the RP with the lowest-price offer will be matched first. This collusion does not harm the system because every resource offer includes a minimum reservation price; 
        thus, a JC cannot force an RP to perform computation for less than what the RP voluntarily accepts. This collusion is desirable because it provides solver resilience to the system due to the incentive for JCs to form this collusion.

        \item {\it Job Creator and Mediator}:  Both JC and Mediator can benefit from this collusion by taking the RP's security deposit and splitting it between them, while the JC can also benefit by having its jobs executed without paying. 
        This can be achieved by the JC requesting mediation on a correct result and the Mediator ruling in favor of the JC. To an honest RP, this collusion will appear as a faulty Directory, faulty Mediator, or non-deterministic job, though it can eliminate the last by repeating the job execution. The 
        RP removes the 
        Mediator and Directory from its trusted list. Thus, a mediator can launch this attack  only once per RP. 
        
        \item {\it Job Creator and Directory}:  This collusion is similar to the one between the JC and Mediator. 
        JC and Directory can collude in multiple ways. For example,
        the Directory manipulates the job so that the RP will return \texttt{JobDescriptionError} result status. Then, JC will request mediation, the Directory will provide the correct job to the Mediator, and the Mediator will rule in favor of JC. 
        Since the RP cannot distinguish this collusion from the one between a JC and mediator, as a response, 
        the RP removes the 
        Mediator and Directory from its trusted list. Thus, a Directory can  launch this attack only once per RP. 

        \item {\it Job Creator and Resource Provider}:   There is no possible benefit from this collusion.

        \item {\it Resource Provider and Solver}:   The goal of this collusion is the same as for the collusion between JC and Solver, and its impact and mitigation are also the same. 
        This collusion is desirable for the same reason as well.  
      

        \item {\it Resource Provider and Mediator:} RP and Mediator can both benefit from this collusion by taking the JC's security deposit and splitting it between them; while the RP can also benefit by receiving payment for a job without actually executing it. 
        This can be achieved by the RP returning any job result, which the JC might verify and request to be mediated, upon which Mediator will rule in favor of the RP. 
        This collusion is mitigated in the same way as the collusion between JC and Mediator, except that the roles of JC and RP are reversed.

        \item {\it Resource Provider and Directory}:    The goal of this collusion is the same as the RP and Mediator collusion, and it can be achieved and handled in a similar manner as the JC and Directory collusion. To an honest JC, collusion will appear as a faulty or colluding Directory. As a response, the
        JC removes the Mediator and Directory from its trusted list. Thus, a Directory can  launch this attack only once per JC.  

        \item {\it Directory and Solver}:   There is no benefit from this collusion.

        \item {\it Directory and Mediator}:   This collusion aims to ensure that the JC will request Mediation by manipulating  data or availability, and then splitting the payment for Mediation. Depending on the Directory's manipulation and the Mediator's ruling, either the JC or RP will respond in the same way as if the RP or JC were colluding with the Directory or Mediator.

        \item {\it Mediator and Solver}: The goal in this case is for the solver to prioritize trades that include the  Mediator, and further prioritize JCs and RPs with a history of requiring mediation. This could result in unfairness, i.e., some jobs may never get matched. However, this is not a real concern since the JC and RP can act as solvers and match their own offers. 
    \end{itemize}{}

    \reviewercomment{it is possible that the Mediators (and Directory) acts dishonestly.}
    From exploring the possible scenarios, we conclude that the JC and RP could be cheated once by a Mediator or a Directory; however, the faulty agent will be removed from the trusted list afterwards. Since the business model for Mediators and Directories  is attracting RPs and JCs who trust and pay them for their services, they have strong incentives to build a positive reputation in the ecosystem. Thus, we assume they will behave honestly. The formal proof for this conjecture will be provided in future work. 
    }

\begin{table}[t]
    \centering
        \caption{\platform Parameters and System Constraints}
    \label{tab:symbols}
    \vspace{-0.5em}
    \resizebox{\columnwidth}{!}{%
    
\begin{tabulary}{\columnwidth}{|p{0.05\columnwidth}|J|}
    \hline
    \multicolumn{2}{|c|}{\bf \platform Smart Contract (SC)}\\
    \hline
    $\theta$ & penalty rate set by the contract \\
    $d$ & deposit by JC and RP for collateral before transaction occurs  \\
    $d_{min}$ & minimum security deposit \\
    $g_{m}$ & cost of requesting mediation\\
    $g_{r}$ & cost for an RP to participate; includes the costs of submitting the offer and the results, as well as partial payment to the Solver for a match accepted by the smart contract \\
    $g_{j}$ & cost for a JC to participate; includes the costs of submitting the job offer, as well as partial payment to the Solver for a match accepted by the smart contract \\
    $\pi_d$  & payout to wronged party when \emph{deception} is detected \\
    $n$  & number of times Mediator executes a disputed job \\
    \hline
    \multicolumn{2}{|c|}{\textbf{Mediator (M) \ifSCDIR{and Directory (D)}\fi}}\\
    \hline
    $\pi_m$  & payout to the \emph{Mediator} when it is invoked \\
    $\pi_{a}$  & payout to the Mediator \ifSCDIR{and Directory }\fi for being \emph{available} for the duration of the job, which also covers the Solver match payment\\
    \hline
    \multicolumn{2}{|c|}{\textbf{Job Creator (JC)}} \\
    \hline
    $p_v$ & probability that JC verifies a job result (verification rate) \\
    $p_a$ & probability that a correct execution of a non-deterministic job returns a ``normal'' result for which the JC will not request mediation (functionally, this is an indicator of how honest the JC is)\\ 
    $\pi_c$ & payment from JC to RP for successfully \emph{completing} a computation\\
    $b$ & JC's benefit for finished job minus cost of submitting job \\
    $c_{v}$ & JC's cost to verify a job result \\
    \hline
    \multicolumn{2}{|c|}{\textbf{Resource Provider (RP)}} \\
    \hline
    $\pi_r$ & payout that \emph{resource} provider asks for completing the job \\
    $p_e$ & probability that RP intentionally \emph{executes} the job correctly ($1-p_e$ is the probability of cheating) \\
    $c_e$ & \emph{execution} cost for RP to compute job \\
    $c_d$ & computational cost for RP to \emph{deceive} and create wrong answer  \\
    \hline
    \multicolumn{2}{|c|}{\textbf{System Constraints}} \\
    \hline
    1 & $b>\pi_{c}+\pi_{a}+g_{j}$ for honest JC \\
    2 & $\theta \geq0$ the penalty rate cannot be negative \\
    3 & $n>0$ else the mediator does not re-execute the job \\
    4 & $c_e > c_d > 0$ else the RP never has incentive to cheat \\
    5 & $\hat\pi_{c} \geq \pi_{c} \geq \pi_{r} > c_e$; if $c_e \geq \pi_{r}$, the RP will abort the job; and if $\pi_{r} > \hat\pi_{c}$, then that job and resource offer match is disallowed by the contract \\
    6 & $d \geq d_{min}$ \\
    \hline
\end{tabulary}
    }

\end{table}

\section{Analyzing Participant Behavior and Utilities}
\label{sec:analysis}



In this section, we formulate the agents' actions as a game and solve for strategies that result in a Nash equilibrium. 
We also show that platform parameters can be set so that a rational JC will follow the protocol with, at least, some minimum probability. The extensive-form representation of the game (shown in \cref{fig:gameTree}) is explained~below.  

\subsection{Game-Theoretic Model}
\label{subsec:game_model}

To understand the game, we first introduce a set of parameters in \cref{tab:symbols}. Parameters denoted by $p$ are probabilities, $\pi$ are payouts, $c$ are costs incurred by agents, and $g$ are the costs for interacting with the platform. Many parameters have constraints on the valid values that they can take and on their relationships with other parameters. 

Now, we consider the JC's choices. The JC has a job to outsource, whose execution provides a constant benefit $b$ upon receiving the correct job result. The JC is willing to pay a job specific price, up to $\hat\pi_{c}$, which is appropriate for the resources required to have the result computed. The JC is able to verify if the job result is correct, but it incurs verification cost $c_v$ by doing so. In order to mitigate this cost, the JC may choose to verify the result with some probability $p_v$, trading confidence for lower costs. 




\begin{figure}[t]
    \centering
    \FullText
        {\includegraphics[width=\columnwidth]{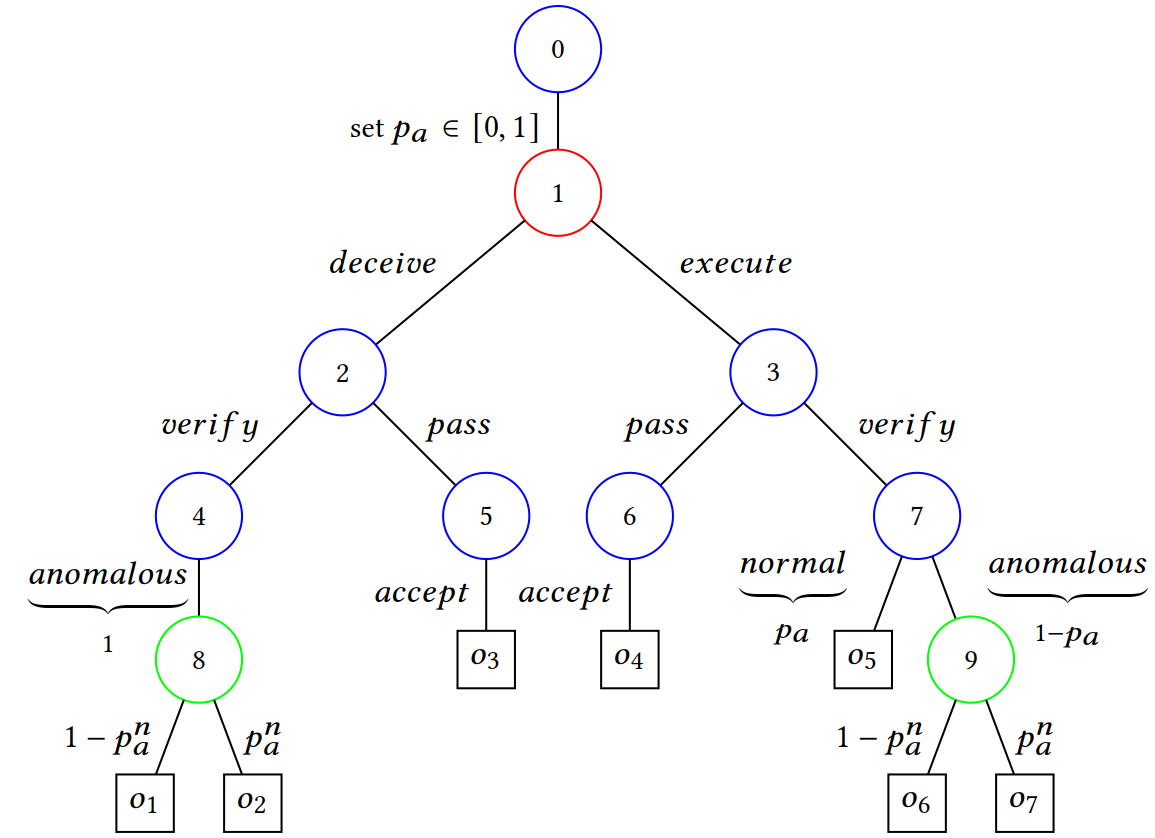}}
        {\input{figures/gameTree.tex}
    }
    \caption{Extensive-form game produced by the \platform protocol. Blue nodes indicate JC moves, red nodes indicate RP moves, green nodes indicate Mediator's probabilistic outcome. The game is sequential, but the decisions are hidden, so we treat it as a simultaneous move game. Each outcome of the game has payouts for the agents, which are found in \cref{fig:payoffs}}
    \label{fig:gameTree}
    \vspace{0.4em}
    
    \resizebox{\columnwidth}{!}{\begin{tabular}{@{}cllll@{}}
\toprule
\multicolumn{1}{l}{Outcome} & Party & \begin{tabular}[c]{@{}l@{}}Contract\\ Payoff\end{tabular} & \begin{tabular}[c]{@{}l@{}}Self \\ Benefit\end{tabular} & Reward(r)         \\ \hline
\multirow{2}{*}{$o_1$}      & RP    & $\pi_{d}-g_{r}-\pi_{a}$                                         & $-c_d$                                              & $\pi_{d} - c_d-g_{r}-\pi_{a}$     \\
                            & JC    & $-g_{j}-d-g_{m}-\pi_{a}$                                                & $-c_v$                                      & $-g_{j}-d-g_{m}-c_v-\pi_{a}$  \\
                            \hline
\multirow{2}{*}{$o_2$}      & RP    & $-d-g_{r}-\pi_{a}$                                                      & $-c_d$                                      & $-d-g_{r}-c_d-\pi_{a}$      \\
                            & JC    & $\pi_{d}-g_{j}-g_{m}-\pi_{a}$                                           & $-c_v$                                      & $\pi_{d}-g_{j}-g_{m}- c_v-\pi_{a}$  \\
                            \hline
\multirow{2}{*}{$o_3$}      & RP    & $\pi_{c}-g_{r}-\pi_{a}$                                                 & $-c_d$                                      & $\pi_{c}-g_{r}- c_d-\pi_{a}$   \\
                            & JC    & $-g_{j}-\pi_{c}-\pi_{a}$                                                & 0                                           & $-g_{j}-\pi_{c}-\pi_{a}$        \\
                            \hline
\multirow{2}{*}{$o_4$}      & RP    & $\pi_{c}-g_{r}-\pi_{a}$                                                 & $-c_e$                                      & $\pi_{c}-g_{r}- c_e-\pi_{a}$   \\
                            & JC    & $-g_{j}-\pi_{c}-\pi_{a}$                                                & $b$                                         & $b-g_{j}-\pi_{c}-\pi_{a}$       \\
                            \hline
\multirow{2}{*}{$o_5$}      & RP    & $\pi_{c}-g_{r}-\pi_{a}$                                                 & $-c_e$                                      & $\pi_{c}-g_{r}-c_e-\pi_{a}$   \\
                            & JC    & $-g_{j}-\pi_{c}-\pi_{a}$                                                & $b-c_v$                                     & $b-g_{j}-\pi_{c}-c_v-\pi_{a}$   \\
                            \hline
\multirow{2}{*}{$o_6$}      & RP    & $\pi_{d}-g_{r}-\pi_{a}$                                                 & $-c_e$                                      & $\pi_{d}-g_{r}- c_e-\pi_{a}$     \\
                            & JC    & $-g_{j}-d-g_{m}-\pi_{a}$                                                & $b-c_v$                                     & $b-g_{j}-d-g_{m}-c_v-\pi_{a}$ \\
                            \hline
\multirow{2}{*}{$o_7$}      & RP    & $-d-g_{r}-\pi_{a}$                                                      & $-c_e$                                      & $-d-g_{r}-c_e-\pi_{a}$      \\
                            & JC    & $\pi_{d}-g_{j}-g_{m}-\pi_{a}$                                           & $b-c_v$                                     & $b+\pi_{d}-g_{j}-g_{m}-c_v-\pi_{a}$ \\ 
\bottomrule
\end{tabular}}
    \caption{Game outcomes and payments in \cref{fig:gameTree}. For example, $o_1$ is the outcome when the RP deceives, the JC verifies, and the Mediator finds non-determinism and faults the JC. \revision{ The reward for the agents is the sum of contract payoff and self-benefit and is denoted $r_{o_i}$. The Mediator is not included in the table: in every outcome, it receives $\pi_{a}$;  when the JC requests mediation, the M also receives $\pi_{m}$, which is drawn from the faulty party's deposit $d$.}
    }
    \label{fig:payoffs}
    \vspace{-.6cm}
\end{figure}{}

However, a dishonest JC can design non-deterministic jobs and we assume that the JC can always recover the correct result from any output\footnote{Note that the RP and M cannot be expected to analyze the code, and hence cannot know that the job contains non-determinism.}. The goal of the JC in designing a non-deterministic job is to get the correct output without having to pay the RP. It can accomplish this if it requests mediation, and when it does, the mediator concludes that the result returned by the RP is incorrect. Thus, if a JC designs a job to look ``normal'' to the mediator with probability $p_{a}$ and ``incorrect'' with probability  $1-p_{a}$, then the JC will accept correct results with probability $p_{a}$ and request mediation with probability $1-p_{a}$. A simple illustration of such a job is one which returns a natural number as its solution, and changes the sign of that value (i.e., multiply by -1) with a fixed probability, creating a set of positive results and a set of negative results.

The game starts with the JC choosing a probability value for~$p_{a}$. A probability of $p_a = 0$ means that the JC is completely dishonest, and all results will be considered incorrect. A probability of $p_a = 1$, means the JC is honest and all correct outputs are accepted.  

The RP makes the next move, choosing between honestly executing the job or forging a result to deceive the JC. The RP may be motivated to return a false result because the job execution cost $c_e$ is higher than the deception cost $c_d$ and it is possible that the JC does not verify the result and thus does not detect the deception. The RP executes the job with probability $p_e$, where $p_{e}=0$ means that the RP is completely dishonest and  always attempts to deceive the JC, while $p_{e}=1$ means that the RP is honest and  executes the job correctly. Note that the correct result can be a fault code if the computation fails. 


The JC then makes the next move and selects its verification strategy,  choosing between verifying the job result or passing on the verification. The JC verifies the result with probability~$p_v$. If verification finds the result to be incorrect, the JC  requests mediation to dispute the result.

To resolve the dispute, the Mediator must determine which agent is at fault. The Mediator does this by performing the steps that an RP would take to execute a job, repeating several times to detect non-determinism. When initialized the smart contract specifies a \emph{verification count} $n$, which is the number of times the Mediator will execute a job checking for anomalies. Since the job has probability $p_a$ of returning a normal result and the Mediator executes the job $n$ times, the probability that the job returns a normal result in every execution is $p_a^n$. Thus, the Mediator detects a non-deterministic job with probability $1-p_{a}^{n}$, and fines the JC for being~dishonest.


As stated in \cref{sec:offers}, to deter cheating through fines, we require JCs and RPs to provide a \emph{security deposit} $d$ when submitting offers. We define the deposit to be dependent on the JC's estimate of the job price $\hat{\pi}_{c}$ and scaled by a penalty rate $\theta$, which is set by the smart contract. The job price $\hat\pi_{c}$ estimated by the JC is the same as $\pi_{c}$ except it uses the JC's bid prices and requested resources. The penalty rate must be set to a sufficiently high value to deter misbehavior. The security deposit must also cover the cost of potential mediation $\pi_{m}$, which we estimate as $\hat\pi_{c} \cdot n$ since the JC is willing to pay $\hat\pi_{c}$ and the Mediator must run $n$ times. 
\ifSCDIR{The deposit must also cover the availability costs of the Mediator and \Aron{Directory has no connection to the smart contract or blockchain, so its payment should not be a part of the deposit. JC can pay the Directory off-chain.}Directory as well as the Solver costs; we let $\pi_a$ denote the sum of these costs. Thus, we define the minimum security deposit $d_{min}$ as: }
\else{
The deposit must also cover the availability costs of the Mediator as well as the Solver; we let $\pi_a$ denote the sum of these costs. Thus, we define the minimum security deposit $d_{min}$ as: }
\fi
\begin{align}
\vspace{-0.1in}
    d_{min}   &= \underbrace{\hat\pi_{c} \cdot \theta}_{\text{penalty}} + \underbrace{(\hat\pi_{c}\cdot n)}_{\pi_{m}} + \pi_{a}
    \label{eq:deposit}
\end{align}


%

The game induced by the interactions of the actors described above has 7 possible outcomes. Each  outcome has payouts for the agents as described in \cref{fig:payoffs}. To illustrate how the payouts are calculated, consider the following sequence. The JC pays $g_j$ to submit a non-deterministic job with probability $p_a$ of returning a normal result. The RP honestly executes the job incurring cost $c_e$ and pays $g_r$ to submit the result. Since the RP executed honestly, the JC receives the benefit $b$. The JC verifies the result, incurring cost $c_v$, and detects that the non-beneficial part of the result is anomalous. It then attempts to avoid paying $\pi_{c}$ to the RP by requesting mediation, paying $g_{m}$. The Mediator executes the job $n$ times and if in one or more of those executions it encounters an anomalous result, which occurs with probability $1-p_{a}^n$, then it submits to the smart contract that the JC is at fault and the JC loses its security deposit $d$ for submitting non-deterministic jobs resulting in outcome $o_{6}$. Otherwise if all of the results from the repeated executions are normal then the JC successfully cheats and receives $\pi_{d}$ as reparations for being ``faulted'' resulting in $o_{7}$. The payouts of the other outcomes are calculated similarly. The platform interaction costs are fixed properties of a given smart contract and its underlying platform.
%
%

\begin{table}[t]
\caption{RP payoffs by decision}
\vspace{-0.5em}
\centering
\footnotesize
\label{tab:RP}
\resizebox{0.9\columnwidth}{!}{%
\begin{tabular}{@{}lp{5cm}l@{}}
\toprule
        & verify                                                                            & pass     \\ \midrule
execute & $\overbrace{- c_{e} - g_{r} - \pi_{a} +}^{U_{EV}^{RP}} \newline 
\pi_{c} \Big(n p_{a}^{n} \left(p_{a} - 1\right) + p_{a} + p_{a}^{n} \theta \left(p_{a} - 1\right)\Big) +  \newline
\pi_{d} \left(1-p_{a}\right)\left(1-p_{a}^{n}\right)$ & $\overbrace{\pi_{c}- c_{e} - g_{r} - \pi_{a}}^{U_{EP}^{RP}}$ \\
\rule{0pt}{15pt} 
deceive & $- c_{d} - g_{r} - \pi_{a} + \newline \underbrace{p_{a}^{n} \pi_{c} \left(- n - \theta\right) + \pi_{d} \left(1 - p_{a}^{n}\right)}_{U_{DV}^{RP}}$                     & $\underbrace{\pi_{c}- c_{d}- g_{r} - \pi_{a}}_{U_{DP}^{RP}}$ \\ \bottomrule
\end{tabular}
}
\end{table}

%
\revision{
Since we assume that Directories and Mediators  always act honestly, we do not consider their strategic decisions in our game analysis.}
The expected utilities of both the RP and the JC are summarized in \cref{tab:RP,tab:JC}, respectively. The table is constructed by considering the possible actions of the RP and JC. There are two possible actions for RP (execute and deceive) and two for JC (verify and pass) as illustrated by the tree in \cref{fig:gameTree}. Hence, the utilities of RP and JC depend on the four action combinations and their possible outcomes $o_1 \cdots o_7$. To understand how the utilities are calculated, consider the example of the case when RP chooses to execute, and JC chooses to verify. This is node 7 in \cref{fig:gameTree}, and there are three possible outcomes $o_5,o_6,o_7$. Outcome $o_5$ occurs with probability $p_a$, $o_6$ with probability $(1-p_a)(1-p_{a}^{n})$, and $o_7$ with probability   $(1-p_a)p_{a}^{n}$. Thus,
\begin{equation}
    U_{EV}^{JC} = p_{a}\cdot t_{o_5} + (1-p_a)\big((1-p_{a}^{n})\cdot t_{o_6} + p_{a}^{n}\cdot t_{o_7}\big)
    \label{eq:UN}
\end{equation}{}
The utility for each combination of actions is denoted by utility $U$ with superscript of the agent (i.e., RP and JC) and subscript of the action combination of RP and JC (i.e., EV is <execute,verify>, DP is <deceive,pass>, EP is <execute,pass>, and DV is <deceive,verify>). Note that we replace the total outcomes payoffs using \cref{fig:payoffs} in \cref{tab:RP,tab:JC}.

\subsection{Equilibrium Analysis}
\label{sec:Eqanalysis}

Here we analyze the utility functions explained in previous section. \FullText{}{For lack of space, we describe only the key results.} Detailed proofs for these statements can be found in
\FullText{\cref{app:proofs}.}{our full paper \cite{modicum-arxiv}.}

The ideal operating conditions for the platform would be if the RP always \emph{executed} ($p_e=1$), the JC never had to \emph{verify} ($p_v=0$) and only submitted jobs which returned deterministic results ($p_a=1$). However, if the agents are rational, these parameters do not constitute a Nash equilibrium. This is because if the JC does not verify, then the RP will choose to deceive rather than execute since $U_{EP}^{RP}<U_{DP}^{RP}$. The JCs utility in this case is always negative and so it is better off not participating in the platform. This proves the following theorem: 
\begin{theorem}[JC should not always pass]
If the JC always \emph{passes} (i.e., $p_v=0$), then the RP's best response is to always \emph{deceive} (i.e., $p_e=0$). 
\end{theorem}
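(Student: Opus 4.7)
The plan is to show directly, by comparing the RP's two relevant expected utilities, that when the JC never verifies, the ``deceive'' branch strictly dominates the ``execute'' branch for the RP.

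First, I would specialize the RP's expected utility to the regime $p_v=0$. Since the JC passes with probability $1$, only the right column of \cref{tab:RP} is reached: the relevant utilities are $U_{EP}^{RP}=\pi_{c}-c_{e}-g_{r}-\pi_{a}$ if the RP executes and $U_{DP}^{RP}=\pi_{c}-c_{d}-g_{r}-\pi_{a}$ if it deceives. Thus, for any mixed strategy $p_e\in[0,1]$, the RP's expected payoff is the convex combination $p_e\,U_{EP}^{RP}+(1-p_e)\,U_{DP}^{RP}$.

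Next, I would take the difference of the two pure-strategy utilities and apply the parameter constraints collected in \cref{tab:symbols}. We have
\begin{equation}
U_{DP}^{RP}-U_{EP}^{RP}=(\pi_{c}-c_{d}-g_{r}-\pi_{a})-(\pi_{c}-c_{e}-g_{r}-\pi_{a})=c_{e}-c_{d}.
\end{equation}
By Constraint~4, $c_e>c_d>0$, so $U_{DP}^{RP}-U_{EP}^{RP}=c_e-c_d>0$. Hence deceiving yields a strictly higher payoff than executing, and this inequality is independent of the RP's own mixing parameter $p_e$ and of the JC's choice of $p_a$. The RP's expected utility is therefore a strictly decreasing linear function of $p_e$ on $[0,1]$, and is maximized uniquely at $p_e=0$.

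The only subtlety to address is whether the RP might still prefer to abstain rather than play its best response in the game; but this does not affect the statement, which is about the best response \emph{given participation}. With the dominance argument above, the conclusion follows: $p_e=0$ is the RP's unique best response whenever $p_v=0$, which is exactly what the theorem asserts. No step here is a real obstacle; the proof is essentially a one-line comparison made possible by the cost-structure assumption $c_e>c_d$ baked into the system constraints.
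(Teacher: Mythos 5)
Your proof is correct and takes essentially the same route as the paper: the paper's appendix proof likewise rests on the single observation that $U_{EP}^{RP} < U_{DP}^{RP}$ always holds (i.e., $c_e > c_d$ from the system constraints), so when the JC always passes the RP's best response is to deceive. The only difference is that the paper additionally remarks that the resulting JC utility $U_{DP}^{JC}$ is negative, so an always-passing JC would not participate — a point beyond the formal statement, which, as you note, does not affect the claim you proved.
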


\begin{table}[t]
\caption{JC payoffs by decision}
\vspace{-0.5em}
\footnotesize
\label{tab:JC}
\resizebox{\columnwidth}{!}{%
\begin{tabular}{@{}lp{6cm}l@{}}
\toprule
         & verify                                                                                                               & pass        \\ \midrule
execute  & $\overbrace{b - g_{j} - \pi_{c} \left(n + \theta\right) \left(1-p_{a}\right)\left(1-p_{a}^n\right) + }^{U_{EV}^{JC}} \newline \left(1 - p_{a}\right) \left(- g_{m} + p_{a}^{n} \pi_{d}\right) - c_{v} - p_{a} \pi_{c} - \pi_{a}  $ & $\overbrace{b - g_{j} - \pi_{a} - \pi_{c}}^{U_{EP}^{JC}}$   \\
\rule{0pt}{15pt} 
deceive  & $- c_{v} - g_{j} - g_{m} + p_{a}^{n} \pi_{d} - \pi_{a} \newline \underbrace{- \pi_{c} \left(n - p_{a}^{n} \left(n + \theta\right) + \theta\right)}_{U_{DV}^{JC}}$                    & $\underbrace{- g_{j} - \pi_{a} - \pi_{c}}_{U_{DP}^{JC}}$     \\  
\bottomrule
\end{tabular}
}
\end{table}


If the RP always chose to deceive, the platform would serve no purpose. Therefore, we must ensure that if the JC chooses to verify, the RP prefers to execute. This occurs when $U_{DV}^{RP} < U_{EV}^{RP}$. We show in 
\FullText{\cref{app:proofs}}{\cite{modicum-arxiv}}
that this is true if $p_a^{n+1} > \frac{1}{2}$. When these conditions are true, we can prove the following theorem:
\begin{theorem}[$p_e>0$]If $p_v>0$ and $p_{a}^{n+1}>\frac{1}{2}$, then a rational RP must execute the jobs with non-zero probability. 
\label{thm:pe}
\end{theorem}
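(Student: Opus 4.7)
The plan is to show that under the two hypotheses, the pure strategy $p_e = 0$ (always deceive) is not a rational best response for the RP, so any rational RP must play $p_e > 0$. I would prove this in two stages: first, a direct algebraic comparison establishing $U_{EV}^{RP} > U_{DV}^{RP}$ under the condition $p_a^{n+1} > 1/2$; second, a rationality argument that uses $p_v > 0$ to rule out $p_e = 0$.

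For the first stage, I would take the two entries in Table~\ref{tab:RP} corresponding to the ``execute'' and ``deceive'' rows of the ``verify'' column and form the difference $\Delta := U_{EV}^{RP} - U_{DV}^{RP}$. The terms $-g_r - \pi_a$ cancel, and after collecting the coefficients of $\pi_c$ (namely $np_a^n(p_a - 1) + p_a + p_a^n\theta(p_a - 1)$ against $-p_a^n(n + \theta)$) and of $\pi_d$ (namely $(1-p_a)(1-p_a^n)$ against $(1-p_a^n)$), the result simplifies to an expression of the form
\begin{equation*}
\Delta \;=\; (c_d - c_e) \;+\; \pi_c\bigl[p_a + (n+\theta)\,p_a^{n+1}\bigr] \;-\; \pi_d\,p_a(1 - p_a^n).
\end{equation*}
I would then use the deposit-size relation $\pi_d \leq d$ together with equation~\eqref{eq:deposit} to bound $\pi_d$ in terms of $\hat\pi_c$, $\theta$, and $n$, and invoke constraint~5 of Table~\ref{tab:symbols} to replace $\hat\pi_c$ by an upper bound compatible with $\pi_c$. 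The goal is to show that the $\pi_c$-term dominates the $\pi_d$-term once $p_a^{n+1} > 1/2$: intuitively, $p_a^{n+1}$ is the probability with which the Mediator, upon $n$ re-executions plus the RP's execution, consistently sees the ``normal'' branch, so crossing $1/2$ ensures the expected mediator ruling favours the honest RP.

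For the second stage, I would argue by best response. Given $\Delta > 0$ and $p_v > 0$, suppose for contradiction that a rational RP chooses $p_e = 0$. Against a pure deceiver, the JC's verify action strictly dominates passing (since verification recovers the RP's deposit, as guaranteed by the penalty rate $\theta$), so the JC's best response is $p_v = 1$. But at $p_v = 1$ the RP's payoff from executing exceeds that from deceiving by exactly $\Delta > 0$, so the RP strictly prefers to deviate to $p_e = 1$, contradicting $p_e = 0$ being rational. Hence $p_e > 0$, as claimed.

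The main obstacle I expect is the tightness of the bound in Step~1: the threshold $p_a^{n+1} = 1/2$ must emerge exactly from the comparison, and the $\pi_d$-term $p_a(1 - p_a^n)$ is not monotone in $p_a$, which complicates the sign analysis. Handling this cleanly will likely require splitting into the regimes $p_a \geq 1/2^{1/(n+1)}$ and $p_a < 1$, and using $\pi_d \leq \hat\pi_c \theta$ together with the constraint $\pi_c \leq \hat\pi_c$ to absorb the $\pi_d$-term into the $\theta\,p_a^{n+1}$ summand of the $\pi_c$-term. The remaining steps are mechanical once the sign of $\Delta$ is pinned down.
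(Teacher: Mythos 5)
Your stage-1 algebra on the general payoffs of \cref{tab:RP} is right: the difference does collapse to $\Delta = (c_d-c_e)+\pi_c\bigl[p_a+(n+\theta)p_a^{n+1}\bigr]-\pi_d\,p_a\bigl(1-p_a^n\bigr)$. The genuine gap is in how you then dispose of the $\pi_d$-term. Bounding $\pi_d$ by the deposit $d$ gives you nothing, because $d$ is only bounded \emph{below} (by $d_{min}$ in \cref{eq:deposit}; participants may deposit more), so $\pi_d\le d$ is a vacuous upper bound, and if $\pi_d$ really could be of the order of the whole deposit the statement would simply be false (e.g.\ $n=1$, $\theta=0$, $p_a=0.9$, $\pi_d\gg\pi_c$ makes $\Delta<0$ despite $p_a^{n+1}>\tfrac12$). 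Your fallback bound $\pi_d\le\hat\pi_c\theta$ appears nowhere in the paper, collapses to $\pi_d\le 0$ when $\theta=0$ (which constraint~2 permits), and the absorption step runs against the direction of constraint~5: you would need $\hat\pi_c\,(1-p_a^n)\le\pi_c\,p_a^n$, but the paper only guarantees $\hat\pi_c\ge\pi_c$. The paper avoids all of this by fixing $\pi_d=\pi_c$ in the analysis (the wronged party is compensated with the \emph{actual job price} drawn from the cheater's deposit); under that substitution the bracketed terms combine, $\Delta=(c_d-c_e)+p_a^{n+1}\pi_c\,(n+\theta+1)$, and the threshold falls out immediately from $n\ge 1$, $\theta\ge 0$ (so $n+\theta+1\ge 2$) and $c_e-c_d<c_e<\pi_r\le\pi_c$: $p_a^{n+1}>\tfrac12$ gives $p_a^{n+1}\pi_c(n+\theta+1)\ge 2\pi_c p_a^{n+1}>\pi_c>c_e-c_d$. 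Without pinning $\pi_d$ to $\pi_c$, the exact $\tfrac12$ threshold you are trying to recover does not emerge.

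Your stage 2 also introduces an unnecessary and partly unsound detour: against a pure deceiver, verifying does \emph{not} strictly dominate passing for the JC in general---it depends on $c_v$ and $g_m$ (that is exactly the ``always pass'' JC that \cref{thm:pe}'s companion theorem excludes separately), so you cannot conclude $p_v=1$ as a best response. The theorem's hypothesis already hands you $p_v>0$; the paper simply combines that with the dominance of \emph{execute} in the verify column established above and concludes $p_e>0$, with no contradiction argument about the JC's response needed.
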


Recall $p_a$ is a parameter set by the JC, so to satisfy the condition on $p_a$ in \cref{thm:pe} we must show that the platform can set parameters to force the JC to choose a value for $p_a$ that is greater than some lower bound. We assume that the JC is rational and chooses $p_a$ to optimize its utility $U^{JC}$. To find the bound we take its derivative with respect to $p_a$, $\frac{\partial U^{JC}}{\partial p_a}$, and assess how each parameter shifts the optimal value for $p_a$. The trends are as $n$, $\theta$, $g_m$ increase $p_a$ also increases, meanwhile as $\pi_c$ and $p_e$ increase $p_a$ decreases. Knowing how varying each parameter shifts the optimal value for $p_a$ we can select the worst-case values for each parameter, i.e. those that minimize optimal $p_a$, maximizing dishonesty.  Specifically, if the parameter and $p_a$ are inversely related, set the parameter to its maximum allowed value, and if they are directly related set the parameter to its minimum value. Thus, the worst-case values for each of the parameters are: $g_{m}=0$, $p_{e}=1$, $n=1$, $\theta=0$. The plot in \cref{fig:dJUvsPa1} uses those parameters and shows that increasing $n$ does cause the optimal value for $p_a$ to increase. We see that when $n=1$, the optimal $p_a=0.5$; and when $n=4$, the optimal $p_a=0.943$. This can be summarized as: 
\begin{theorem}[Bounded $p_a$]
    Setting the JC utility function parameters to minimize the optimal value for $p_{a}$ (maximizing a rational JC's dishonesty) ensures that any deviation will increase $p_a$. The platform  controls $n$ and $\theta$, and so controls the minimum optimal value of $p_a$.  
    \label{thm:bpa}
\end{theorem}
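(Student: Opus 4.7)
My plan is to treat the JC's expected utility $U^{JC}$ as a function of $p_a$, locate the maximizer $p_a^*$, and show how $p_a^*$ shifts with each of the other parameters. First, I would assemble $U^{JC}(p_a)$ by probability-weighting the four cell utilities of Table \ref{tab:JC}, with weights $p_e p_v$, $p_e(1-p_v)$, $(1-p_e) p_v$, $(1-p_e)(1-p_v)$ for the EV, EP, DV, and DP cases, respectively. Only the two verify-branches (EV and DV) contain $p_a$, since in the pass-branches the JC never triggers mediation and hence the term $p_a^n$ never appears. This reduces the analysis to a single one-variable optimization problem on $[0,1]$.

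Next, I would compute $\partial U^{JC}/\partial p_a$, which is a polynomial of degree $n$ in $p_a$ whose coefficients depend affinely on $n$, $\theta$, $g_m$, $\pi_c$, and $p_e$. Setting this derivative to zero gives the first-order condition for an interior optimum $p_a^*$. Applying the implicit function theorem to this FOC, I would determine the sign of $\partial p_a^*/\partial x$ for each $x \in \{n, \theta, g_m, \pi_c, p_e\}$. The signs should match the qualitative claim stated in the preceding paragraph of the paper: increases in $n$, $\theta$, or $g_m$ shift $p_a^*$ upward, while increases in $\pi_c$ or $p_e$ shift it downward. Because these monotonicities hold on the entire admissible parameter region, the minimum of $p_a^*$ over that region is attained at the corner $(g_m, p_e, n, \theta) = (0, 1, 1, 0)$, with $\pi_c$ at its upper bound $\hat\pi_c$, and any deviation from that corner can only raise $p_a^*$. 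This establishes the first assertion of the theorem.

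For the second assertion, I would observe that of the shifting parameters, only $n$ and $\theta$ are fixed by the smart contract at deployment, whereas $g_m$, $\pi_c$, and $p_e$ are determined by the market or the counterparties and hence are outside the platform's direct control. Since $p_a^*$ is strictly increasing in both $n$ and $\theta$, the platform can drive the worst-case minimum value of $p_a^*$ upward to any desired threshold by choosing $n$ and $\theta$ sufficiently large, and in particular to satisfy the bound $p_a^{n+1} > \tfrac{1}{2}$ required by \cref{thm:pe}. The numerical evidence in \cref{fig:dJUvsPa1}, where the worst-case optimum rises from $p_a^* = 0.5$ at $n=1$ to $p_a^* = 0.943$ at $n=4$, corroborates this conclusion.

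The main obstacle I anticipate is that $\partial U^{JC}/\partial p_a = 0$ is a polynomial of degree $n$ with no closed-form root in general, so the entire comparative-statics argument must be carried out via signs of implicit derivatives rather than by manipulating an explicit expression for $p_a^*$. I must also check that the interior stationary point identified by the FOC is indeed the global maximizer on $[0,1]$, which I would do by verifying the second-order condition at that point and by evaluating $U^{JC}$ at the endpoints $p_a \in \{0, 1\}$ under the worst-case parameters to rule out boundary solutions.
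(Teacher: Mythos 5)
Your proposal follows essentially the same route as the paper: the paper also forms the total expected utility $U^{JC}$ as the $p_e,p_v$-weighted sum of the four cell payoffs, takes $\partial U^{JC}/\partial p_a$, sets it to zero, rearranges the first-order condition, reads off the comparative statics for $g_m$, $p_e$, $\pi_c$, $\theta$, $n$ term by term (an informal version of your implicit-function-theorem step), and then fixes the worst-case corner $g_m=0$, $p_e=1$, $n=1$, $\theta=0$ and uses the plot in \cref{fig:dJUvsPa1} to show that raising $n$ (and $\theta$) raises the minimal optimal $p_a$. Your added checks (second-order condition, endpoint evaluation, and noting that at $g_m=0$ the role of $\pi_c$ degenerates) are refinements the paper glosses over, not a different argument.
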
{}
We want to minimize the number of times the Mediator has to replicate the computation, so we set $n=2$ and set $\theta=50$ which yields a minimum $p_a=0.99$. 

\iflong

\begin{table}[]
\centering
\caption{How the optimal value for $p_a$ varies with an increase in each of the parameters.}
\label{tab:dJUvsPa}
\begin{tabular}{@{}ll@{}}
\toprule
parameter  action   & $p_a$                \\ \midrule
$n$        increase & increase             \\
$\theta$   increase & increase             \\
$g_m$      increase & increase             \\
$\pi_{c}$  increase & decrease up to limit \\
$p_e$      increase & decrease             \\ \bottomrule
\end{tabular}
\end{table}
\fi

\begin{figure}[tb]
 \centering
  \resizebox{.7\columnwidth}{!}
  {
    \input{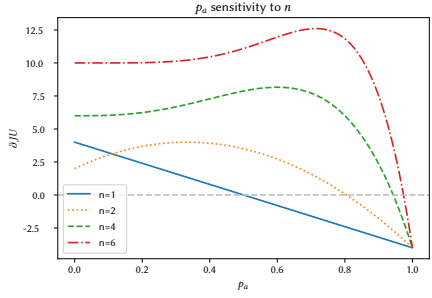}
  }
  \vspace{-1em}
 \caption{We vary the value of $n$ and plot $p_a$ against $\frac{\partial U^{JC}}{\partial p_a}$. This shows that as $n$ increases, so does the optimal value of $p_a$ (zero-crossing of derivative curve). Parameter values are $\pi_{c}=2$, $g_{m}=0$, $\theta=0$, $c_v=1$, $b=4$, $p_e=1$}
 \label{fig:dJUvsPa1}
\end{figure}


So far we have shown that we can ensure that a rational RP will prefer to execute when a JC verifies, and deceive when the JC passes, and we can limit the amount of cheating the JC can achieve through non-deterministic jobs. Next, we analyze the JC utilities to determine the Nash equilibria of the system.

\textbf{Analyzing JC types:} The preferences of the JC depend on the parameters in its utility function. We refer to each combination of preferences as a ``type'' of JC. We will call a JC that prefers to always verify as type 1. A JC that prefers to always pass is type 4; we have already covered this type and determined that a JC of this type will not participate. A JC that prefers to verify when the RP executes and pass when the RP deceives is type 2. A JC that prefers to pass when the RP executes and verify when the RP deceives is type 3. The JC has a preference because its utility is better in that case. These preferences are summarized in  \cref{tab:JCredux} with the $*$ symbol followed by the type that prefers that choice. This table is \cref{tab:JC} refactored to remove terms that do not impact the JC's preference and to highlight the relationship between the preference and the cost of verification~$c_v$. We consider the equilibrium for each type assuming the RP has been restricted as we discussed previously. The theorems below summarize these observations. Proofs are available in~\FullText{\cref{app:proofs}.}{our full paper \cite{modicum-arxiv}.}
\begin{theorem}[JC type 1]  If the JC is type 1, it will always verify ($p_v=1$) since $c_v$ is sufficiently low. This results in a pure strategy equilibrium <execute,verify>.
\end{theorem}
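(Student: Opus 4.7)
The plan is to establish the claim in two steps: first characterize what being a type 1 JC means formally, then verify the mutual best-response condition for the pure strategy profile $\langle \text{execute}, \text{verify} \rangle$.

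First I would unpack the definition of type 1 using Table~\ref{tab:JCredux}. A type 1 JC prefers verifying regardless of what the RP does, so both
\begin{equation*}
U_{EV}^{JC} > U_{EP}^{JC} \quad \text{and} \quad U_{DV}^{JC} > U_{DP}^{JC}
\end{equation*}
hold. Inspecting the utility expressions in Table~\ref{tab:JC}, the terms that differ between verifying and passing reduce to $-c_v$ on one side and to the gain the JC obtains from challenging an incorrect result through mediation on the other (an expected payout involving $p_d$, $g_m$, and the penalty terms weighted by $(1-p_a^n)$). I would write out this difference explicitly and read off the inequality of the form $c_v < \Phi$, where $\Phi$ collects the expected mediation gains. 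This identifies the regime ``$c_v$ sufficiently low'' under which the JC is type 1, and guarantees that verifying is a dominant action for the JC.

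Next I would verify the equilibrium. Since verifying dominates passing for a type 1 JC, the best response is the pure strategy $p_v = 1$ irrespective of the RP's mixed strategy. With $p_v = 1 > 0$, I invoke Theorem~\ref{thm:pe} together with the sharper comparison $U_{EV}^{RP}$ vs.\ $U_{DV}^{RP}$ from Table~\ref{tab:RP}: under the constraints $c_e > c_d$, $\theta \ge 0$, and the bound $p_a^{n+1} > 1/2$ (which Theorem~\ref{thm:bpa} shows the platform can enforce via the choice of $n$ and $\theta$), the inequality $U_{EV}^{RP} > U_{DV}^{RP}$ is strict, so the RP's unique best response to $p_v = 1$ is $p_e = 1$. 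Finally I close the loop by checking that given $p_e = 1$, the JC's payoff comparison reduces to $U_{EV}^{JC}$ vs.\ $U_{EP}^{JC}$, which by the type 1 assumption still favors verifying. Thus $\langle \text{execute}, \text{verify} \rangle$ is a mutual best response and hence a pure strategy Nash equilibrium.

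The main obstacle I anticipate is making the ``$c_v$ sufficiently low'' condition precise and ensuring it is consistent with the other parameter constraints in Table~\ref{tab:symbols} (particularly the relationships between $\pi_c$, $\hat\pi_c$, $\pi_d$, and $d_{min}$). The algebra itself is routine, but care is needed so that the threshold derived from $U_{DV}^{JC} > U_{DP}^{JC}$ (the tighter of the two type 1 inequalities, since deception triggers the mediation gain more reliably) is simultaneously compatible with the RP-restriction bound on $p_a$ established earlier; otherwise the type 1 regime could be vacuous.
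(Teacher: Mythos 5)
Your proposal is correct and follows essentially the same route as the paper's proof: type 1 means $U_{EV}^{JC}>U_{EP}^{JC}$ and $U_{DV}^{JC}>U_{DP}^{JC}$ so verifying strictly dominates passing, and then Theorem~\ref{thm:bpa} (the enforced bound $p_a^{n+1}>\tfrac{1}{2}$) together with Theorem~\ref{thm:pe} gives that the RP's best response to verification is to execute, yielding the pure strategy equilibrium $\langle\text{execute},\text{verify}\rangle$. Your additional explicit closing of the best-response loop and the $c_v$-threshold discussion are harmless refinements of the same argument (note only the small slip of writing $p_d$ where you mean $\pi_d$).
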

\begin{theorem}[JC type 2]  If the JC is type 2, it results in two pure strategy equilibria <execute,verify>, <deceive,pass>, and one mixed strategy Nash equilibrium where the JC randomly mixes between verifying and passing.
\end{theorem}
\begin{theorem}[JC type 3]  If the JC is type 3, it will result in a Nash equilibrium where the JC randomly mixes between verifying and passing, and the RP mixes between executing and deceiving.
\end{theorem}

\textbf{Strategies:} Based on these preferences, a type 1 JC will always verify $p_v=1$. Type 2 JCs may also choose to always verify, or choose a mixed strategy, setting  $p_v$ such that the RP receives the same utility regardless of whether it executes or deceives resulting in a Nash equilibrium. It achieves this by solving \cref{eq:msne_sv2} for $p_v$, setting $c_e=\pi_{c}$ and $c_d=0$. Type 3 JCs only have the option of solving \cref{eq:msne_sv2} for $p_v$. 
\begin{equation}
    \begin{aligned}{}
        p_{v}\cdot U_{EV}^{RP} + (1-p_{v})\cdot U_{EP}^{RP} &= p_{v}\cdot U_{DV}^{RP} + (1-p_{v})\cdot U_{DP}^{RP} \\
        \text{Solve for $p_v$; } \quad p_v &= \frac{c_{e}- c_{d}}{p_{a}^{n+1} \pi_{c} \left(\theta + n + 1\right)}
    \label{eq:msne_sv2}
    \end{aligned}{}
\end{equation}{}

The RP's strategy changes depending on which type of JC it is working with. If the JC is type 1, it is simple: the RP must execute. However, the other two types can mix, so the RP must also mix. It does this by solving \cref{eq:msne_se1} for $p_e$. The challenge with this is that the RP does not know the value of $c_v$. However all other parameters are known once a match is made except $p_a$ which from our work earlier we know that $p_a\geq .99$. Thus, the RP can sample $c_v$ from a uniform distribution where $0 \leq c_v \leq P_{EV}^{JC}$ for type 2 and $0 \leq c_v \leq P_{DV}^{JC}$ for type 3 ($P_{EV}^{JC}$ is the JCs preference value for <execute,verify> from \cref{tab:JCredux}). However, since the RP does not know which type of JC it is working with, it further mixes between the 3 strategies according to its belief on the distribution of the types of JC in the system. 
\begin{equation}
    \begin{aligned}{}
        p_{e}\cdot U_{EV}^{JC} + (1-p_{e})&\cdot U_{DV}^{JC} = p_{e}\cdot U_{EP}^{JC} + (1-p_{e})\cdot U_{DP}^{JC} \\
        \text{Solve for $p_e$; } \quad p_e = &\frac{2 \pi_{c} - c_{v} - g_{m} - \pi_{c} \left(1 - p_{a}^{n}\right) \left(n + \theta + 1\right) }{p_{a} \left(2 \pi_{c}- g_{m} - \pi_{c} \left(1 -p_{a}^{n} \right) \left(n + \theta + 1\right) \right)}
    \label{eq:msne_se1}
    \end{aligned}{}
\end{equation}{}


\begin{table}[t]
\caption{Simplified JC payoffs to assess dominant strategy with $\pi_d=\pi_c$}
\vspace{-0.5em}
\footnotesize
\label{tab:JCredux}
\centering
\resizebox{0.8\columnwidth}{!}{%
\begin{tabular}{p{0.9cm}p{4.75cm}r}
\toprule
        & verify                                                                                     & pass \\ \midrule
execute & $- \pi_{c} \left(1 - p_{a}\right) \left(1 - p_{a}^{n}\right) \left(n + \theta + 1\right) + \left(1 - p_{a}\right) \left(- g_{m} + 2 \pi_{c}\right) ^{*1,2} $ & $c_{v}^{*3,4}$   \\
deceive & $2 \pi_{c} - g_{m} - \pi_{c} \left(1 - p_{a}^{n}\right) \left(n + \theta + 1\right)^{*1,3}$                                                                & $c_{v}^{*2,4}$     \\ \bottomrule
\end{tabular}
}
\end{table}

This analysis shows that we can limit the dishonesty of the JC, and that the JC and RP can compute strategies that will result in a mixed-strategy Nash equilibrium. \revision{Further, we can show that the JC will not have to verify frequently by examining \cref{eq:msne_sv2} and showing that the maximum verification rate is low. First, we know that if $c_{e}<\pi_{c}$ (which should hold since otherwise the RP will always deceive), then $\frac{c_{e}-c_{d}}{\pi_{c}}\leq 1$. In this case, the verification rate is at its maximum when $c_{e}=\pi_c$ and $c_{d}=0$. Simplifying \cref{eq:msne_sv2}, we find that $\frac{1}{p_{a}^{n+1}(\theta+n+1)}$. In establishing \cref{thm:bpa}, we showed that setting $\theta$ and $n$ enforces a minimum value for $p_{a}$. Again, choosing $n=2$ and $\theta=50$ to minimize the number of times the Mediator has to replicate the computation and recalling that this results in $p_a\geq0.99$, we substitute these values into our simplified equation and find that  $p_v = 0.02$. This means that the JC will verify 2\% of the results. Since $p_a\geq 0.99$, mediation will occur 0.02\% of the time if the JC is cheating. 

}
\se{@AD: haven't found a paper that provides a bug/error frequency estimate. Not sure what I'm looking for exactly.}

\section{\platform Implementation}
\label{sec:implementation}
\begin{figure*}{}
\centering
\subfloat[Job Creator CPU]{\includegraphics[width=.33\textwidth]{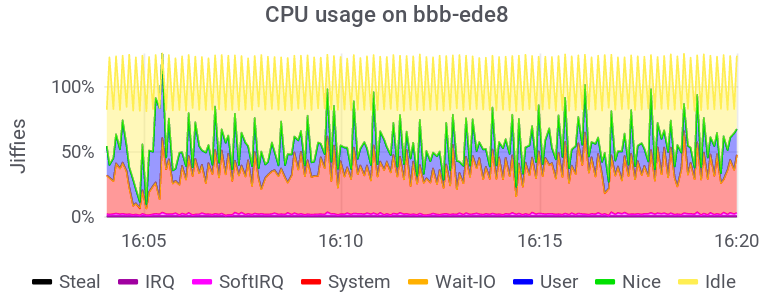}}\hfil \subfloat[Resource Provider CPU] {\includegraphics[width=.33\textwidth] {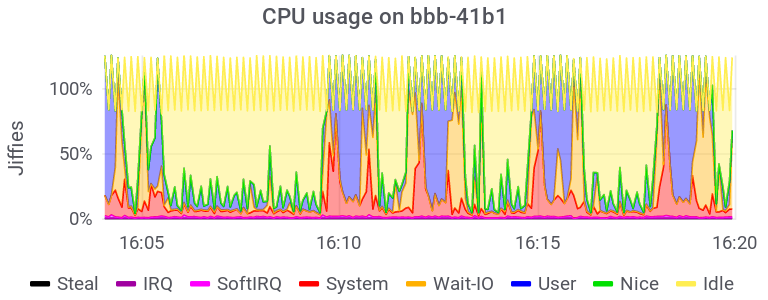}\label{fig:RPCPU_OH}}
\hfil \subfloat[Mediator CPU]{\includegraphics[width=.33\textwidth]{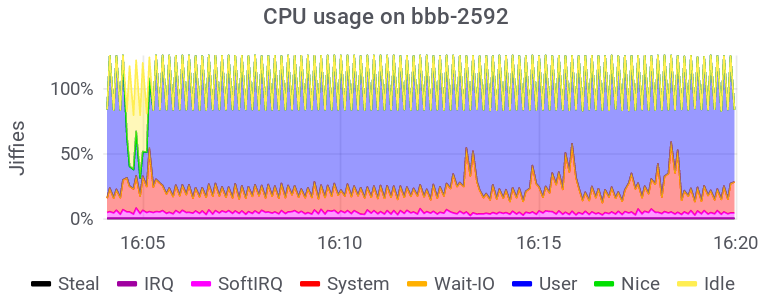}} 

\subfloat[Job Creator RAM]{\includegraphics[width=.33\textwidth]{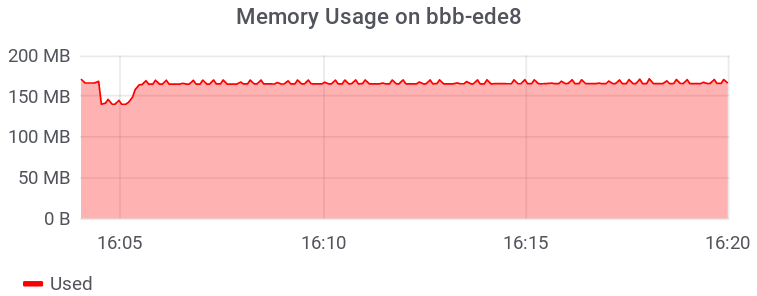}}\hfil   
\subfloat[Resource Provider RAM]{\includegraphics[width=.33\textwidth]{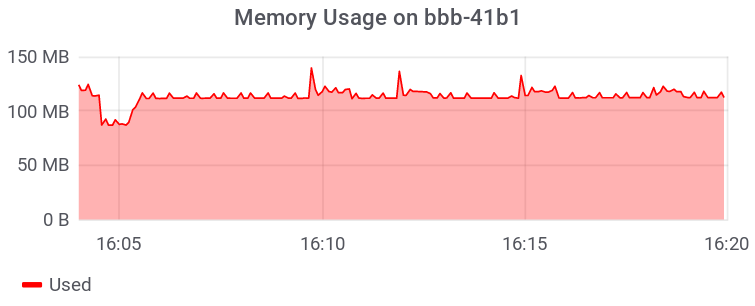}}\hfil
\subfloat[Mediator RAM]{\includegraphics[width=.33\textwidth]{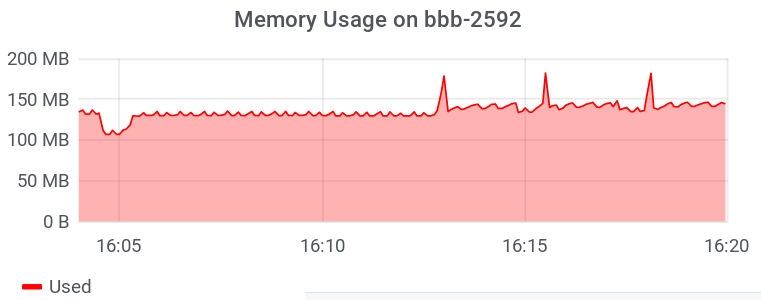}}
\caption{\platform Total Resource usage. Each plot shows the resource consumption on a node.}\label{fig:overhead}
\vspace{-0.1in}
\end{figure*}


In this section, we describe an implementation of \platform. 
The code is available on GitHub~\cite{modicum}. We use a private Ethereum network to provide smart contract functionality. \platform actors, including Job Creators, Resource Providers, Solvers, and Directory services are implemented as Python services. The matching solver uses a greedy approach to match offers as they become available using a maximum bipartite matching algorithm.  These actors use the JSON-RPC interface to connect to the Ethereum Geth client \cite{geth}.  Note that each actor can be configured with any number of Geth clients, and the ledger can be implemented either as a private blockchain or we can use the main Ethereum chain as the ledger. We use Docker \cite{merkel2014docker} images to package jobs. Jobs can be run securely by separating the job from the underlying infrastructure through isolation. This can be achieved using a hardening solution such as AppArmor \cite{Bauer2006AppArmor} or seccomp \cite{seccomp} in conjunction with Docker. This protects computational nodes from erroneous or malicious jobs, if properly configured. Proper configuration has been discussed in other papers, for example \cite{bui2015analysis}, and we do not go into detail here. 
To determine job requirements, jobs are profiled using cAdvisor~\cite{cadvisor}. 



As part of the offer and matching specification, we require the JC to include the hash of the base of the Docker image. Additionally, during setup, Mediators and RPs specify a set of supported base Docker images (\cref{fig:seqdiag}: \textit{mediatorAddFirstLayer} and \textit{resourceProviderAddFirstLayer}). This permits some optimization since RPs are able to specify which base images they have installed, thus by matching them accordingly, we can reduce the bandwidth required to transfer the job by the size of the base image. Common base images vary between about 2MB - 200MB \cite{crunch}. 

\subsection{Experimental Evaluation}
\label{sec:experimental}


 JCs, RPs, and a Mediator were deployed on a 32 node BeagleBone cluster with Ubuntu 18.04. 
 We set up a private Ethereum network.
 The Solver and Directory were  deployed on an  Intel i7 laptop with 24GB RAM.   The actors connect to the Geth client \cite{geth} each using a unique Ethereum account. 

\textbf{Measuring Gas Costs and Function Times:} To measure the minimum cost of executing a job via \platform, we had a single JC submit 100 jobs 
 and measured the function gas costs and call times independent of the job that was being executed.  
These can be found in 
\FullText{\cref{app:gascost}.}{our full paper \cite{modicum-arxiv}.}
The JC's average gas cost of a nominal execution is 
$592,000$ gas. At current Ethereum prices, this converts to \$0.168 per job for the JC~\cite{ethpric}. Comparing this to Amazon Lambda pricing~\cite{awsprice}  on a machine with 512MB RAM (which a BeagleBone has), a job would have to last {\textasciitilde}6 hours to incur the same cost. However, the electrical costs to run a BeagleBone (210-460mA @5V) at maximum load  for that long, assuming \$0.12/kWh electricity price, is only \$0.0016. This illustrates that there is potential for such a transaction system to be a viable option compared to AWS. However, using Ethereum as the underlying mechanism is currently only viable for long running jobs; but work is underway to improve the efficiency of Ethereum~\cite{ethShard}. 
We also measured the gas cost of a mediated execution: the JC's average gas cost in this case is $991,000$, and the Mediator's cost to post the mediation result is $187,000$.

\catcode`\_=12 
\pgfplotstableread[col sep=comma]{data/sim24/acceptResult.csv}\dtAcceptRes
\pgfplotstableread[col sep=comma]{data/sim23/rejectResult.csv}\dtRejectRes
\pgfplotstableread[col sep=comma]{data/sim24/postJobOfferPartOne.csv}\dtPostJOffPA
\pgfplotstableread[col sep=comma]{data/sim24/postJobOfferPartTwo.csv}\dtPostJOffPB
\pgfplotstableread[col sep=comma]{data/sim24/postMatch.csv}\dtPostMatch
\pgfplotstableread[col sep=comma]{data/sim24/postResOffer.csv}\dtPostResOff
\pgfplotstableread[col sep=comma]{data/sim24/postResult.csv}\dtPostResult

\pgfplotstableread[col sep=comma]{data/sim24/SPB.csv}\SPB
\pgfplotstableread[col sep=comma]{data/sim24/SPMC.csv}\SPMC

\catcode`\_=8 

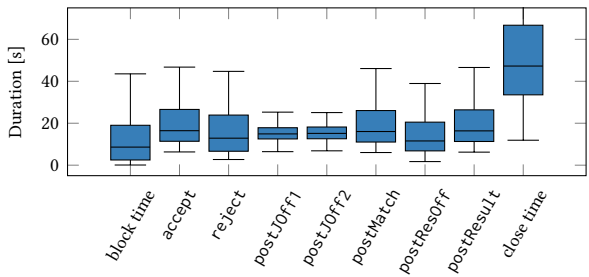
\begin{figure}[t]
\begin{tikzpicture}
  \begin{axis}[
    boxplot/draw direction = y,
    xtick = {1,2,3,4,5,6,7,8,9},
    yticklabel style = {font=\footnotesize},
    xticklabel style = {font=\footnotesize, rotate=60},
    xticklabels = {block time, \texttt{accept}, \texttt{reject}, \texttt{postJOff1}, \texttt{postJOff2}, \texttt{postMatch}, \texttt{postResOff}, \texttt{postResult}, close time},
    ylabel style = {font=\footnotesize},
    ylabel = {Duration [s]},
    ymax = 75, 
    ymin = -5,
    cycle list = {set1-blue},
    height=.45\columnwidth,
    width=\columnwidth
    ]
    \addplot+[boxplot, fill, draw=black] table[ y = {dt}] {\SPB};
    
    \addplot+[boxplot, fill, draw=black] table[ y = {dt}] {\dtAcceptRes};
    \addplot+[boxplot, fill, draw=black] table[ y = {dt}] {\dtRejectRes};
    \addplot+[boxplot, fill, draw=black] table[ y = {dt}] {\dtPostJOffPA};
    \addplot+[boxplot, fill, draw=black] table[ y = {dt}] {\dtPostJOffPB};
    \addplot+[boxplot, fill, draw=black] table[ y = {dt}] {\dtPostMatch};
    \addplot+[boxplot, fill, draw=black] table[ y = {dt}] {\dtPostResOff};
    \addplot+[boxplot, fill, draw=black] table[ y = {dt}] {\dtPostResult};
    
    \addplot+[boxplot, fill, draw=black] table[ y = {dC}] {\SPMC};
  \end{axis}
\end{tikzpicture}
\vspace{-0.8em}
\caption{Duration of \platform function calls during nominal operation.}
\label{fig:dt-run}
\vspace{-0.1in}
\end{figure}


During these tests,
 duration of the function calls was also measured (see \cref{fig:dt-run}). We note that the mean time for a block with transactions to be mined (block time in \cref{fig:dt-run}) is about every 10 seconds, and that function call delay is consistently about 5 to 10 seconds longer. This may be attributable to calls missing a recent block. The close time is the time measured between \texttt{MatchClosed} events. Since the jobs were run sequentially it is a measure of the cumulative time added to the execution of a job, in this test running a job through \platform added approximately 52 seconds. 

{\bf Measuring the Overhead of Platform:} To measure the overhead, 
we compared the execution of jobs run with Docker containers natively against jobs run in \platform. The job we used was the bodytrack computer vision application drawn from the PARSEC benchmarking suite \cite{Bienia2008Parsec}. PARSEC has been used to benchmark resource allocation platforms \cite{Shekhar2017INDICES} as well as platforms for high performance computing \cite{Ren2017Nosv} among others. We again ran 100 jobs, which took a total of 221 minutes, averaging 2 minutes per job. The mean time for a block to be mined was 31 seconds, meaning it took about 4 blocks to complete a job. The block time was likely longer in this experiment because as the blockchain grew block mining times appeared to increase, though we did not study this explicitly. 
This application tracks the 3D pose of a human body through a sequence of images. In Figure \ref{fig:CPUjobCompare}, we see that \platform increases the runtime by about 1 second or 4\%. In Figure \ref{fig:MEMjobCompare}, the average memory consumption increases by 0.1MB, or ~3\%. To check mediation, we ran the jobs again, but rejected the results for all 100 jobs and requested mediation.
\catcode`\_=12 
\pgfplotstableread[col sep=comma]{data/native/runtime.csv}\native
\pgfplotstableread[col sep=comma]{data/sim26/runtime.csv}\modicum

\pgfplotstableread[col sep=comma]{data/native/total_rss.csv}\native
\pgfplotstableread[col sep=comma]{data/sim26/total_rss.csv}\modicum
\catcode`\_=8 

\begin{figure}[t]
\centering
\subfloat[Runtime]{
\begin{tikzpicture}
  \begin{axis}[
    boxplot/draw direction = y,
    xtick = {1,2},
    yticklabel style = {font=\footnotesize},
    xticklabel style = {font=\footnotesize},
    xticklabels = {Native, \platform},
    ylabel style = {font=\footnotesize},
    ylabel = {Runtime [s]},
    ymin=22,
    height=.4\columnwidth,
    width=.5\columnwidth
    ]
    \addplot+[boxplot, fill, draw=black] table[ y = {runtime}] {\native};
    
    \addplot+[boxplot, fill, draw=black] table[ y = {runtime}] {\modicum};
    
  \end{axis}
\end{tikzpicture}
\label{fig:CPUjobCompare}
}
\subfloat[Average memory]{
\begin{tikzpicture}
  \begin{axis}[
    boxplot/draw direction = y,
    xtick = {1,2},
    yticklabel style = {font=\footnotesize},
    xticklabel style = {font=\footnotesize},
    xticklabels = {Native, \platform},
    ylabel style = {font=\footnotesize},
    ylabel = {Memory [MB]},
    height=.4\columnwidth,
    width=.5\columnwidth
    ]
    \addplot+[boxplot, fill, draw=black] table[ y = {mean}] {\native};
    
    \addplot+[boxplot, fill, draw=black] table[ y = {mean}] {\modicum};
    
  \end{axis}
\end{tikzpicture}
\label{fig:MEMjobCompare}
}
\caption{Running time and memory usage on \platform and native execution.}
\label{fig:Compare}
\vspace{-.5cm}
\end{figure}
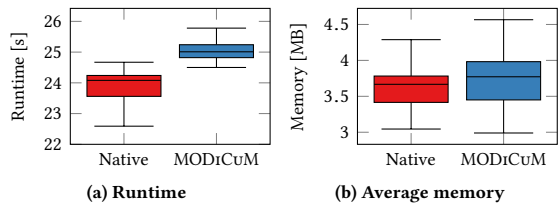

Resource consumption while running the benchmark on \platform can be seen in \cref{fig:overhead}. The nodes are at idle during the valley from 16:04:30 to 16:05:30 at which point the platform is started. From 16:05:30 onward, \platform is running, and the bursts that can be seen, for example at 14:09:30-14:10:45 in Figure \ref{fig:RPCPU_OH}, are when jobs are being executed. \platform introduces about 20-25MB RAM overhead for each agent type. It introduces ~80\% CPU overhead on the Mediator, 30\% on the Job Creator, and no apparent change to the Resource Provider. This is acceptable since the BBB devices are resource constrained and so the overhead will be less significant on more powerful compute nodes.




\section{Conclusions}
\label{sec:concl}
\iflong
\begin{table}
    \centering
    \caption{Summary of the key concepts in \platform}
    \label{tab:summary}
    \resizebox{\columnwidth}{!}{%
        \begin{tabulary}{\textwidth}{|J|J|J|}
            \hline Requirement & Component \\
            \hline transfer funds between agents & Smart contract \\
            \hline fully specify and create offers  & Smart contract \\
            \hline share and find offers  & Smart contract \\
            \hline pair resource and job offers in matches  & Solver \\
            \hline notify the JC and the RP of the match & Smart Contract \\
            \hline transfer data between agents & Directory \\
            \hline raise disputes & Job Creator via smart contract \\
            \hline resolve disputes & Mediator via smart contract \\
            \hline achieve consensus on the state & Ethereum platform \\
            \hline
            
        \end{tabulary}
        }
\end{table}
We described the mechanisms and interactions required to implement an open and decentralized computation market. 
\cref{tab:summary} summarizes the key concepts.
An open market of computational resources, where resource owners and resource users trade directly with each other, can lead to greater participation and more competitive pricing. Our design deters participants from misbehaving---which is a key problem in decentralized systems---by resolving disputes via dedicated mediators and by imposing enforceable fines. 
We provided analytic results showing that it is possible to set up a market where it is in the best interests of rational participants to be honest.
Finally, we presented an implementation of \platform based on Ethereum, Python, and Docker, and demonstrated via experiments the overhead. Detailed specification and implementation of  \platform is available online~\cite{modicum}.
\se{if room include concepts about relaxing limitations.}
\fi

An open market of computational resources, where resource owners and resource users trade directly with each other has the potential for greater participation than volunteer computing and more competitive pricing than cloud computing. The key challenges associated with implementing such a market stem from the fact that any agent can participate and behave maliciously. Thus, mechanisms for detecting misbehavior and for efficiently resolving disputes are required. In this paper we propose a smart contract-based solution to enable such a market. Our design deters participants from misbehaving by resolving disputes via dedicated Mediators and by imposing enforceable fines through the smart contract. This is possible because we recognized that the results do not need to be globally accepted, convincing the JC will often suffice. We learned that due to the limitations of Ethereum our platform is only suitable for long running tasks, but there is space between the cost of electricity and AWS for a platform of this nature. Future work is looking into other platforms that support smart contracts, as well as leveraging improvements to Ethereum. 

\textbf{Acknowledgments}
We are very thankful to Prof. Gabor Karsai and Prof. Aniruddha Gokhale for their insight and comments on the paper. This work was supported in part by National Science Foundation through award numbers 1647015 and 1818901.

\bibliographystyle{ACM-Reference-Format}
\bibliography{references,refs} 
\balance

\onecolumn
\newpage
\twocolumn
\ifFull
    \appendixpage 
    
    \begin{appendices}
    \setcounter{theorem}{0}
\section{Definitions} \label{app:specification}
In this appendix we define what it means for a match to be feasible and the how the deposits and payments are determined.

\subsection{Matching Feasibility}\label{app:feasible}

We can match two offers if they satisfy the following conditions: Job offer limit variables must be lower than the resource offer cap variables. Job offer max variables must be higher than the resource offer price variables.
Additionally, there should be a mediator with the same architecture of JO which is trusted by both JC and RP.


\begin{align}
RO.\textnormal{\textit{instructionCapacity}} &\geq JO.\textnormal{\textit{instructionsLimit}}  \\
RO.\textnormal{\textit{ramCapacity}} &\geq JO.\textnormal{\textit{ramLimit}} \\
RO.\textnormal{\textit{localStorageCapacity}} &\geq JO.\textnormal{\textit{localStorageLimit}}  \\
RO.\textnormal{\textit{bandwidthCapacity}} &\geq JO.\textnormal{\textit{bandwidthLimit}} \\
RO.\textnormal{\textit{instructionPrice}} &\leq JO.\textnormal{\textit{instructionMaxPrice}}  \\
RO.\textnormal{\textit{bandwidthPrice}} &\leq JO.\textnormal{\textit{maxBandwidthPrice}}  \\
JO.\textnormal{\textit{architecture}} &= RP.\textnormal{\textit{architecture}} \\
JO.\textnormal{\textit{directory}} &\in RP.\textnormal{\textit{trustedDirectories}}
\end{align}
\begin{align}
\exists\, M: ~ &M \in (JC.\textnormal{\textit{trustedMediators}} \cap RP.\textnormal{\textit{trustedMediators}}) \nonumber \\ 
& \wedge M.\textnormal{\textit{architecture}} = RP.\textnormal{\textit{architecture}} \\
& \wedge JO.\textnormal{\textit{directory}} \in M.\textnormal{\textit{trustedDirectories}} 
\end{align}

\begin{gather}
    \begin{aligned}
        & \textnormal{\textit{currentTime}}  + RP.\textnormal{\textit{timePerInstruction}} \cdot JO.\textnormal{\textit{instructionLimit}}  \\
        & \leq JO.\textnormal{\textit{completionDeadline}}
    \end{aligned}
\end{gather}


\subsection{Payment}


When the JC or the RP wants to post an offer for a job or resource, it will pay a deposit value to prevent it from cheating on the platform. This deposit value is a function of the posted offer which is more than the price of the job plus mediation. As the price of job or mediation is unclear at the time of posting the offer, a static \texttt{$penaltyRate \gg 1$} is applied as security deposit.

After the job is finished and both the JC and RP agree on the outcome, JC will receive the deposit minus the cost of the job and the RP will receive the deposited value plus the cost of the job.

In case of disagreement, the match will go for mediation. The mediators have a fixed price for their resources. After the submission of their verdict, the deposit of the party who is at fault will be forfeited in favor of the winner and winner will receive both the deposits minus the cost of mediation.

\begin{equation}
    \begin{aligned}
    JO.deposit = & (instructionLimit   \cdot instructionMaxPrice + \\ 
                 & bandwidthLimit \cdot bandwidthMaxPrice) \cdot  \\ 
                 & (\theta + n) + \pi_a 
    \end{aligned}
\end{equation}

\begin{equation}
    \begin{aligned}
    RP.deposit = & (instructionCap \cdot instructionPrice + \\
                 & bandwidthCap \cdot bandwidthPrice ) \cdot \\
                 & (\theta + n)  + \pi_a 
    \end{aligned}
\end{equation}{}

\begin{gather}
    \begin{aligned}
        \pi_c    =& result.instructionCount \cdot \textnormal{\textit{resourceOffer}}.instructionPrice +  \\
                  & result.bandwidthUsage \cdot \textnormal{\textit{resourceOffer}}.bandwidthPrice
    \end{aligned}
    \raisetag{15pt}
\end{gather}

\section{Data Structures}
This appendix includes the definitions of the data structures used to model the components of the platform. 
The platform is a composition of a smart contract, Resource Providers, Job Creators, Mediators, and Directories, and has the following structure:



\label{model:Platform}
\begin{alltt}
Platform\{ mediators: Mediator[],
          resourceProviders: ResourceProvider[],
          jobCreators: JobCreator[],
          resourceOffers: ResourceOffer[],
          jobOffers: JobOffer[],
          matches: Match[],
          results: JobResults[],
          mediatorResults: MediatorResult[],
          penaltyRate : uint \}
\end{alltt}

The platform can support multiple architectures. However, initially it will only have amd64 and armv7.

\label{model:Architecture}
\begin{alltt}
enum Architecture\{ amd64,
                   armv7 \}
\end{alltt}

\subsection{Entities}
The data structure representation of the elements that comprise the  platform are defined below. 

\label{model:JC}
\begin{alltt}
JobCreator\{ trustedMediators: Mediator[]\}
\end{alltt}

\label{model:RP}
\begin{alltt}
ResourceProvider\{ trustedMediators: Mediator[],
                  trustedDirectories: address[],
                  arch: Architecture,
                  timePerInstruction: uint\}
\end{alltt}




\label{model:Mediator}
\begin{alltt}
Mediator\{ arch: Architecture,
          instructionPrice: uint,
          bandwidthPrice: uint,
          trustedDirectories: address[],
          supportedFirstLayers: uint,
          availabilityValue: uint  \}
\end{alltt}



\subsection{Offers}

The format of the job offer is shown below. \texttt{jobCreator} is a unique identifier for the JC. The \texttt{depositValue} is the JC's security deposit for the job. The limits specify how many instructions the Job Creator is willing to pay for (after executing this many, the Resource Provider can give up and still get paid), how much RAM, local storage, and bandwidth the Resource Provider may have to use (again, after reaching these limits, the Resource Provider may stop). \texttt{instructionMaxPrice} specifies the maximum price per instruction that the Job Creator accepts, \texttt{bandwidthMaxPrice} specifies the maximum price per downloaded/uploaded byte (for the job execution layer, not for the base layer) that the Job Creator accepts. \texttt{completionDeadline} is the deadline of RP for submitting the solution. \texttt{matchIncentive} is the reward offered by the JC to the Solver. \texttt{firstLayerHash} is the hash of the job's Docker image. \texttt{URI} gives the location of the job-specific files on the Directory. \texttt{directory} is the identifier of the trusted directory where the job is located. The \texttt{jobHash} is a hash of the Docker image. \texttt{arch} specifies which architectures support the job.

\label{model:JO}
\begin{alltt}
JobOffer\{ depositValue: uint,
          instructionLimit: uint,
          bandwidthLimit: uint,
          instructionMaxPrice: uint,
          bandwidthMaxPrice: uint,
          completionDeadline: uint,
          matchIncentive: uint,
          jobCreator: JobCreator,
          firstLayerHash: uint,
          ramLimit: uint,
          localStorageLimit: uint,
          uri: bytes,
          directory: address[],
          jobHash: uint,
          arch: Architecture;\}
\end{alltt}

\label{model:RO}
\begin{alltt}
ResourceOffer\{ resProvider: address,
               depositValue: uint,
               instructionPrice: uint,
               instructionCap: uint,
               memoryCap: uint,
               localStorageCap: uint,
               bandwidthCap: uint,
               bandwidthPrice: uint,
               matchIncentive: uint,
               verificationCount: uint;\}
\end{alltt}
Prices are per instruction or per byte. Capacities specify what resources the Resource Provider has, and they are used as constraints for matching. All of the participants can deposit as much as they want as long as it is more than the required \texttt{JO/RP.deposit}. \texttt{depositValue} is the variable that holds the amount of deposited value for each offer.
    
\subsection{Results}\label{app:result}
There are various results that are produced during operation of the platform. This appendix describes the data structure of each output. 

When offers are posted, they are recoreded to the ledger. Solvers read the ledger to discover offers and then solve a resource allocation problem to pair compatible offers. The format of the match submitted back to the ledger is below, where \texttt{resourceOffer/jobOffer} is the matched resource/job offer (as defined in \cref{model:RO}), and \texttt{mediator} is a mediator that exists in the trusted list of both the RP and JC data structures (as defined in \cref{model:JC}).
\label{model:Match}
\begin{alltt}
    Match \{ resourceOffer: ResourceOffer,
            jobOffer: JobOffer,
            mediator: Mediator\}
\end{alltt}

The result provided by an RP is structured as a \texttt{JobResult} (see below). \texttt{uri} gives the location of the result on the Directory. \texttt{matchId} is an internal identifier for the match. \texttt{hash} is a hash of the result submitted to the directory. \texttt{instructionCount} is the number of instructions that were executed by the ResourceProvider. \texttt{bandwidthUsage} is the number of bytes downloaded / uploaded by the Resource Provider for the job (not counting the download of Docker layers). The JC has a specific deadline for responding to a result. Whether to approve or decline it. Whether a particular result has been responded to is recored in the \texttt{reaction} filed. The deadline is computed from the \texttt{timestamp} recored in the result. If the deadline is missed the RP can accept the result instead of the JC.

\label{model:JR}
\begin{alltt}
    JobResult\{ uri: bytes,
               matchId: uint,
               hash: uint,
               instructionCount: uint,
               bandwidthUsage: uint,
               reacted: Reaction,
               timestamp: uint,
               status: ResultStatus; \}
\end{alltt}{}

\texttt{status} specifies the specific outcome of the execution and is one of the outcomes defined in the \texttt{ResultStatus} data structure (see below).

\label{model:result}
\begin{alltt}
    enum ResultStatus\{ Completed,
                       Declined,
                       JobDescriptionError,
                       JobNotFound,
                       MemoryExceeded,
                       StorageExceeded,
                       InstructionsExceeded,
                       BandwidthExceeded,
                       ExceptionOccured,
                       DirectoryUnavailable \}
\end{alltt}

\texttt{Completed} means that the Resource Provider finished the job successfully and posted the results on the Directory.
A \texttt{Declined} results status indicates that the RP chose not to execute the job.
\texttt{JobDescriptionError} means that there is an error in the job description.
\texttt{MemoryExceeded}, \texttt{InstructionsExceeded}, \texttt{Bandwidth\-Exceeded} mean that the job exceeded the limits specified in the JobOffer.
\texttt{ExceptionOccured} means that an exception was encountered while executing the job, while \texttt{DirectoryUnavailable} means that the Resource Provider is unable to post results because the Directory is unavailable.

\label{model:MR}
\begin{alltt}
    MediatorResult\{ status: ResultStatus,
                    uri: bytes,
                    matchId: uint,
                    hash: uint,
                    instructionCount: uint,
                    bandwidthUsage: uint,
                    verdict: Verdict,
                    faultyParty: Party; \}
\end{alltt}{}

The platform on itself cannot determine which party is cheating merely based on the \texttt{hash} of the Mediator's Results. For example, the smart contract cannot check whether the job was correctly posted to the directory by the Job Creator or the result was correctly posted to directory by the Resource Provider. Therefore, it is the responsibility of the Mediator to decide who should be punished in the ecosystem.
\texttt{verdict} is the reason for deciding on the cheating party, and \texttt{faultyParty} is the cheating party.

\label{model:verdict}
\begin{alltt}
    enum Verdict\{ ResultNotFound,
                  TooMuchCost,
                  WrongResults,
                  CorrectResults,
                  InvalidResultStatus\}
\end{alltt}
This is the reason that the Mediator chose to punish an actor.
\texttt{ResultNotOnDirectory} means that the Resource Provider did not put the results in the directory.
\texttt{TooMuchCost} means that the Resource Provider charged too much for the job.
\texttt{WrongResults} means that the Resource Provider provided wrong results.
\texttt{CorrectResults} means that the Resource Provider completed the job correctly and the JobCreator should be punished by sending the result for mediation.
\texttt{InvalidResultStatus} means that the Resource Provider's mentioned ResultStatus is wrong. For example, it said that the job description was invalid and returned \texttt{JobDescriptionError}, but the description was correct.

\label{model:party}
\begin{alltt}
    enum Party\{ ResourceProvider, JobCreator \}
\end{alltt}

These are the trustless parties in the ecosystem. Mediators should specify who cheated in a job and should be punished by specifying one of these parties.


\section{Proofs: Equilibrium Analysis}\label{app:proofs}

This appendix provides the proofs for the theorems presented in \cref{sec:Eqanalysis}.
The RP chooses between executing a job and attempting to deceive the JC. The JC chooses between verifying the result and accepting it without verification. Below we examine the payoffs for each combination to determine which conditions will cause each agent to always choose one action (the pure strategy) or to mix randomly between them. The utilities for the RP and JC can be found in \cref{tab:RPU,tab:JCU} respectively. To simplify these utility functions, we set $\pi_{d}$ (the compensation when cheating is detected) equal to $\pi_{c}$ (the amount the JC would have paid the RP for successfully completing a job). This means that if cheating is detected in the RP, the JC will receive at least what it was willing to pay, and if cheating was detected in the JC the RP will receive what it expected for correct execution. This was chosen to limit the additional benefit to dishonest participants while minimizing the harm to honest participants. These simplified utilities can be found in \cref{tab:RPU_pic,tab:JCU_pic}. The proofs for the following theorems rely on determining which strategy dominates thus we further simplify the payoffs by removing terms that are common to both strategies, leaving only the terms that determine which strategy is dominant. The result is \cref{tab:RPDS,tab:JCDS}. 

To refer to the utility of a given combination of actions we use the symbol $U$ with superscript of the agent (i.e., RP and JC) and subscript of the action combination of RP and JC (i.e., EV is <execute,verify>, DP is <deceive,pass>, EP is <execute,pass>, and DV is <deceive,verify>). So for example the utility of the RP when the the RP executes and the JC passes is denoted $U_{EP}^{RP}$ and refers to the upper right utility in \cref{tab:RPU,tab:RPU_pic}. Since \cref{tab:RPDS,tab:JCDS} do not represent utility but rather include only the significant terms for determining strategy dominance we refer to them with $\Delta U$ instead.

\begin{table*}[h]
\centering
\begin{minipage}{.45\textwidth}
    \centering
    \caption{RP payoffs by decision}
    \label{tab:RPU}
    \resizebox{0.9\columnwidth}{!}{%
    \begin{tabular}{@{}lp{5cm}l@{}}
    \toprule
            & verify                                                                            & pass     \\ \midrule
    execute & $\overbrace{- c_{e} - g_{r} - \pi_{a} +}^{U_{EV}^{RP}} \newline 
    \pi_{c} \Big(n p_{a}^{n} \left(p_{a} - 1\right) + p_{a} + p_{a}^{n} \theta \left(p_{a} - 1\right)\Big) +  \newline
    \pi_{d} \left(1-p_{a}\right)\left(1-p_{a}^{n}\right)$ & $\overbrace{\pi_{c}- c_{e} - g_{r} - \pi_{a}}^{U_{EP}^{RP}}$ \\
    \rule{0pt}{15pt} 
    deceive & $- c_{d} - g_{r} - \pi_{a} + \newline \underbrace{p_{a}^{n} \pi_{c} \left(- n - \theta\right) + \pi_{d} \left(1 - p_{a}^{n}\right)}_{U_{DV}^{RP}}$                     & $\underbrace{\pi_{c}- c_{d}- g_{r} - \pi_{a}}_{U_{DP}^{RP}}$ \\ \bottomrule
    \end{tabular}}
\end{minipage}
\hfil
\begin{minipage}{.5\textwidth}
    \centering
    \caption{JC payoffs by decision}
    \label{tab:JCU}
    \resizebox{\columnwidth}{!}{%
    \begin{tabular}{@{}lp{6cm}l@{}}
    \toprule
             & verify                                                                                                               & pass        \\ \midrule
    execute  & $\overbrace{b - g_{j} - \pi_{c} \left(n + \theta\right) \left(1-p_{a}\right)\left(1-p_{a}^n\right) + }^{U_{EV}^{JC}} \newline \left(1 - p_{a}\right) \left(- g_{m} + p_{a}^{n} \pi_{d}\right) - c_{v} - p_{a} \pi_{c} - \pi_{a}  $ & $\overbrace{b - g_{j} - \pi_{a} - \pi_{c}}^{U_{EP}^{JC}}$   \\
    \rule{0pt}{15pt} 
    deceive  & $- c_{v} - g_{j} - g_{m} + p_{a}^{n} \pi_{d} - \pi_{a} \newline \underbrace{- \pi_{c} \left(n - p_{a}^{n} \left(n + \theta\right) + \theta\right)}_{U_{DV}^{JC}}$                    & $\underbrace{- g_{j} - \pi_{a} - \pi_{c}}_{U_{DP}^{JC}}$     \\  
    \bottomrule
    \end{tabular}}
\end{minipage}{}
\end{table*}

\begin{table*}[h]
\centering
\begin{minipage}{.45\textwidth}
    \centering
    \caption{RP payoffs by decision when $\pi_d = \pi_c$}
    \label{tab:RPU_pic}
    \resizebox{\columnwidth}{!}{%
    \begin{tabular}{@{}lp{5cm}l@{}}
    \toprule
            & verify                                                                            & pass     \\ \midrule
    execute & $\overbrace{- c_{e} - g_{r} - \pi_{a} + \pi_{c}}^{U_{EV}^{RP} }+\newline p_{a}^{n} \pi_{c} (p_{a} - 1) \left(n + \theta + 1\right)$ & $\overbrace{\pi_{c}- c_{e} - g_{r} - \pi_{a}}^{U_{EP}^{RP}}$ \\
    \rule{0pt}{15pt} 
    deceive & $\underbrace{- c_{d} - g_{r} - \pi_{a} + \pi_{c} - p_{a}^{n} \pi_{c} \left(n + \theta + 1\right) }_{U_{DV}^{RP}}$                     & $\underbrace{\pi_{c}- c_{d}- g_{r} - \pi_{a}}_{U_{DP}^{RP}}$ \\ 
    \bottomrule
    \end{tabular}}
\end{minipage}
\hfil
\begin{minipage}{.5\textwidth}
    \centering
    \caption{JC payoffs by decision when $\pi_d = \pi_c$}
    \label{tab:JCU_pic}
    \resizebox{\columnwidth}{!}{%
    \begin{tabular}{@{}lp{6cm}l@{}}
    \toprule
             & verify                                                                                                               & pass        \\ \midrule
    execute  & $\overbrace{b - c_{v} - g_{j} - g_{m} \left(1 - p_{a}\right) - \pi_{a} + \pi_{c} \left(1 - 2 p_{a}\right)}^{U_{EV}^{JC}} \newline - \pi_{c} \left(1 - p_{a}\right) \left(1 - p_{a}^{n}\right) \left(n + \theta + 1\right) $ & $\overbrace{b - g_{j} - \pi_{a} - \pi_{c}}^{U_{EP}^{JC}}$   \\
    \rule{0pt}{15pt} 
    deceive  & $- c_{v} - g_{j} - g_{m} - \pi_{a} + \pi_{c} \newline \underbrace{  + \pi_{c} \left(p_{a}^{n} - 1\right) \left(n + \theta + 1\right) }_{U_{DV}^{JC}}$                    & $\underbrace{- g_{j} - \pi_{a} - \pi_{c}}_{U_{DP}^{JC}}$     \\  
    \bottomrule
    \end{tabular}}
\end{minipage}{}
\end{table*}

\begin{table*}[h]
\centering
\begin{minipage}{.45\textwidth}
    \centering
    \caption{Simplify RP utility to assess dominant strategy with $\pi_d=\pi_c$.}
    \label{tab:RPDS}
    \resizebox{0.7\columnwidth}{!}{%
    \begin{tabular}{@{}lll@{}}
    \toprule
            & verify                                                                            & pass     \\ \midrule
    execute & $\overbrace{- c_{e} + p_{a}^{n+1} \pi_{c} \left(n + \theta + 1\right)}^{\Delta U_{EV}^{RP} }$ & $\overbrace{-c_{e}}^{\Delta U_{EP}^{RP}}$ \\
    \rule{0pt}{15pt} 
    deceive & $\underbrace{- c_{d}}_{\Delta U_{DV}^{RP}}$                                    & $\underbrace{-c_{d}}_{\Delta U_{DP}^{RP}}$ \\ 
    \bottomrule
    \end{tabular}}
\end{minipage}
\hfil
\begin{minipage}{.5\textwidth}
    \centering
    \caption{Simplify JC utility to assess dominant strategy with $\pi_d=\pi_c$.}
    \label{tab:JCDS}
    \resizebox{\columnwidth}{!}{%
    \begin{tabular}{@{}lll@{}}
    \toprule
             & verify                                                                                                               & pass        \\ \midrule
    execute  & $\overbrace{-\pi_{c} \left(1 - p_{a}\right) \left(1 - p_{a}^{n}\right) \left(n + \theta + 1\right) + \left(1 - p_{a}\right) \left(- g_{m} + 2 \pi_{c}\right)^{*1,2}}^{\Delta U_{EV}^{JC}} $ & $ \overbrace{c_v^{*3,4}}^{\Delta U_{EP}^{JC}} $   \\
    \rule{0pt}{15pt} 
    deceive  & $\underbrace{2 \pi_{c} - g_{m} - \pi_{c} \left(1 - p_{a}^{n}\right) \left(n + \theta + 1\right)^{*1,3}}_{\Delta U_{DV}^{JC}} $                    & $\underbrace{c_v^{*2,4}}_{\Delta U_{DP}^{JC}}$     \\  
    \bottomrule
    \end{tabular}}
\end{minipage}{}
\end{table*}

\begin{theorem}[JC should not always pass ($p_v>0$)]
If the JC always \emph{passes} (i.e., $p_v=0$), then the RP's best response is to always \emph{deceive} (i.e., $p_e=0$). 
\label{athm:pv}
\end{theorem}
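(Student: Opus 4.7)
The plan is a direct one-line comparison of expected utilities, leaning on the system constraint $c_e > c_d > 0$ (constraint 4 in Table~\ref{tab:symbols}). When $p_v = 0$, the JC never verifies, so with probability one the game reaches the ``pass'' branch regardless of RP's move. Hence the RP's expected utility under \emph{execute} collapses to $U_{EP}^{RP}$ and under \emph{deceive} collapses to $U_{DP}^{RP}$, and no averaging over verification outcomes is needed. This is exactly the setting covered by the ``pass'' column of Table~\ref{tab:RPU_pic} (or equivalently Table~\ref{tab:RPDS}).

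Next I would subtract: using the ``pass'' column entries,
\[
U_{DP}^{RP} - U_{EP}^{RP} = (\pi_c - c_d - g_r - \pi_a) - (\pi_c - c_e - g_r - \pi_a) = c_e - c_d.
\]
By constraint 4, $c_e - c_d > 0$, so $U_{DP}^{RP} > U_{EP}^{RP}$ strictly. Thus \emph{deceive} strictly dominates \emph{execute} under the assumption $p_v = 0$, and any rational RP maximizing expected utility must choose $p_e = 0$.

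There is no real obstacle here beyond clearly invoking the right constraint: the statement reduces to the intuitive fact that if the JC never checks the result, then the RP avoids the honest execution cost $c_e$ and pays only the (strictly smaller) deception cost $c_d$, while collecting the same payment $\pi_c$. The gas costs $g_r$, availability fee $\pi_a$, and the completion payment $\pi_c$ all cancel in the subtraction, which is why the conclusion does not depend on the penalty parameters $\theta$, $n$, or on $p_a$ --- none of the mediation branches are ever reached when $p_v = 0$. I would conclude by noting that strict dominance (rather than mere best response) means the result holds for every rational RP, independently of its beliefs about $p_a$ or the JC's type.
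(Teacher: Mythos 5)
Your proof is correct and follows essentially the same route as the paper: the paper's own argument is simply that $U_{EP}^{RP}<U_{DP}^{RP}$ always holds (which, as you make explicit, reduces to $c_e-c_d>0$ from constraint 4), so a passing JC induces the RP to always deceive. The paper's proof goes one step further to note that this leaves the JC with the always-negative payoff $U_{DP}^{JC}$ and hence no reason to participate, but that remark is beyond the stated claim, which you establish cleanly.
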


\begin{proof}
    Assume that the JC always passes, then the RP will either execute or deceive. Since $U_{EP}^{RP}<U_{DP}^{RP}$ is always true then the RP will always deceive. This corresponds to the JC utility $U_{DP}^{JC}$ which is always negative, thus if the JC will always pass then it should not participate. Therefore, a JC that is participating will not always pass. 
\end{proof}{}

\begin{theorem}[$p_e>0$]If $p_v>0$ and $p_{a}^{n+1}>\frac{1}{2}$, then a rational RP must execute the jobs with non-zero probability. 
\label{athm:pe}
\end{theorem}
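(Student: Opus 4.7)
The plan is to reduce the theorem to the pointwise comparison $U_{EV}^{RP} > U_{DV}^{RP}$, and then combine this inequality with the strategic logic behind the preceding theorem (JC should not always pass) to rule out $p_e = 0$ as a best response.

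First, I would compute the gap directly from the $\pi_d = \pi_c$ form of the RP's payoffs in \cref{tab:RPU_pic}. After cancelling the $-g_r - \pi_a + \pi_c$ terms common to both entries, the remaining $p_a^n \pi_c (p_a-1)(n+\theta+1)$ contribution from $U_{EV}^{RP}$ combines with the $-p_a^n \pi_c(n+\theta+1)$ contribution from $U_{DV}^{RP}$ via $(p_a-1)+1 = p_a$, yielding
\[
U_{EV}^{RP} - U_{DV}^{RP} = \pi_c\,p_a^{n+1}(n+\theta+1) - (c_e - c_d).
\]

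Next, I would bound this gap from below using the system constraints in \cref{tab:symbols}. Constraint~5 gives $\pi_c > c_e > c_d > 0$, hence $c_e - c_d < \pi_c$; Constraints~2 and~3 give $\theta \geq 0$ and $n \geq 1$, hence $n + \theta + 1 \geq 2$. Combining with the hypothesis $p_a^{n+1} > 1/2$,
\[
\pi_c\,p_a^{n+1}(n+\theta+1) \;>\; \pi_c \cdot \tfrac{1}{2} \cdot 2 \;=\; \pi_c \;>\; c_e - c_d,
\]
so $U_{EV}^{RP} > U_{DV}^{RP}$.

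Finally, I would close the argument by contradiction. If the RP set $p_e = 0$, then whenever the JC passes she would receive $U_{DP}^{JC} = -g_j - \pi_a - \pi_c < 0$, violating Constraint~1, so a participating JC with $p_v > 0$ must be best-responding by putting sufficient weight on verification rather than on pass; the preceding theorem supplies exactly this contrapositive link between $p_v > 0$ and a nontrivial RP response. Combined with the step-two inequality, the RP's expected payoff from shifting weight toward execute then strictly exceeds the payoff from always deceiving, contradicting the optimality of $p_e = 0$. The hardest step will be this last one: because the RP's expected utility is linear in $p_e$, ruling out $p_e = 0$ requires lifting the per-branch inequality of step two to a global expected-utility inequality via the JC's rational and participating choice of $p_v$; once that link is made, the contradiction is immediate.
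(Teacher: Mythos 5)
Your first two steps reproduce the paper's proof essentially verbatim: the paper also works from the $\pi_d=\pi_c$ payoffs, reduces the claim to $U_{EV}^{RP}>U_{DV}^{RP}$, i.e.\ $p_a^{n+1}\pi_c(n+\theta+1)>c_e-c_d$, and then closes it with exactly your bounds ($n+\theta+1\geq 2$ from $n>0$, $\theta\geq 0$; $c_e-c_d<\pi_c$ from constraint~5; and the hypothesis $p_a^{n+1}>\tfrac12$). Up to that point the proposal is correct and is the same argument.

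Your step 3, however, tries to prove more than the paper does, and as written it does not go through. First, a negative realized payoff $U_{DP}^{JC}=-g_j-\pi_a-\pi_c$ does not ``violate'' constraint~1; that constraint is a condition on $b$, and the relevant fact (used in the preceding theorem) is only that a JC facing an always-deceiving RP would prefer not to participate. Second, the preceding theorem states $p_v=0\Rightarrow$ the RP's best response is to deceive; what you need is the converse, $p_v>0\Rightarrow$ the RP puts positive weight on execute, which is neither its contrapositive nor true in general. Indeed, the RP's expected utility is linear in $p_e$ with slope $p_v\bigl[\pi_c p_a^{n+1}(n+\theta+1)-(c_e-c_d)\bigr]-(1-p_v)(c_e-c_d)$, which is positive only when $p_v>\frac{c_e-c_d}{p_a^{n+1}\pi_c(n+\theta+1)}$ --- precisely the mixing threshold in \cref{eq:msne_sv2}. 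So for a sufficiently small $p_v>0$ the unique best response is still $p_e=0$, and no argument can lift your per-branch inequality to a global expected-utility inequality for \emph{arbitrary} $p_v>0$. The paper does not attempt this lift: its proof stops at your step 2 and simply concludes that, since the JC verifies with positive probability and execute dominates deceive conditional on verification, a rational RP sets $p_e>0$. You should either adopt that (admittedly informal) reading of the theorem, or strengthen the hypothesis to $p_v$ above the threshold of \cref{eq:msne_sv2}; the contradiction machinery in your last paragraph should be dropped.
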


\begin{proof}
If the JC verifies, the RP dominant strategy is to execute if $U_{DV}^{RP} < U_{EV}^{RP}$. Deriving \cref{tab:RPDS} from \cref{tab:RPU_pic} we know that to execute is dominate if

\begin{equation}
    \begin{aligned}
        - c_{e} + p_{a}^{n+1} \pi_{c} \left(n + \theta + 1\right)>-c_{d} \\
        \text{rearrange terms} \quad p_{a}^{n+1} \pi_{c} \left(n + \theta + 1\right)> c_{e} + -c_{d}
    \end{aligned}{}
    \label{appeq:RPDE}
\end{equation}{}

From \cref{tab:symbols}  we know that $n>0$ so then $2\pi_{c} < \pi_{c}(\theta+n+1)$. We also know $c_e < \pi_{r} \leq \pi_{c}$.  Substituting these into \cref{appeq:RPDE} results in: 

\begin{equation}
    \begin{aligned}
         \Big\{ p_{a}^{n+1} \pi_{c} \left(n + \theta + 1\right) \geq 2\pi_{c}p_{a}^{n+1} \Big\} &> \Big\{\pi_{c} \geq c_{e} \geq c_{e}-c_{d}\Big\} \\
         \text{Simplify: } \quad 2\pi_{c}p_{a}^{n+1} &> \pi_{c} \\
         \text{Solve for } p_{a}^{n+1} \text{:} \quad p_{a}^{n+1}&>\frac{1}{2}
    \end{aligned}
    \label{appeq:RPDE2}
\end{equation}{}

Thus, since $p_v>0$ from \cref{athm:pv}, if \cref{appeq:RPDE2} is true then executing is the dominant strategy and $p_{e}>0$. 
\end{proof}{}

\begin{theorem}[Bounded $p_a$]
    If the parameters in the JC utility function are all set to minimize the optimal value for $p_{a}$ (maximizing a rational JC's dishonesty), then any deviation will increase $p_a$. The platform controls $n$ and $\theta$, and so controls the minimum optimal value of $p_a$. 
    \label{athm:bpa}
\end{theorem}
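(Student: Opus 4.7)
The plan is to treat the JC's total expected utility as a function of $p_a$ (holding all other parameters fixed), identify the rational JC's optimal $p_a^*$ as a stationary point, and then apply comparative statics to show how each parameter shifts this optimum. First I would write the JC's total expected payoff as the convex combination
\[ U^{JC}(p_a) = p_e p_v \, U_{EV}^{JC} + p_e(1-p_v)\,U_{EP}^{JC} + (1-p_e)p_v\,U_{DV}^{JC} + (1-p_e)(1-p_v)\,U_{DP}^{JC}, \]
substituting the closed-form entries from Table~3. Differentiating once in $p_a$ gives a polynomial expression whose root(s) in $[0,1]$ are the candidate $p_a^*$; concavity (checked via the second derivative) together with a boundary check identifies which candidate is the maximizer.

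Next I would apply the implicit function theorem to the first-order condition $\partial U^{JC}/\partial p_a = 0$. For each parameter $x \in \{n,\theta,g_m,\pi_c,p_e\}$,
\[ \frac{dp_a^*}{dx} \;=\; -\,\frac{\partial^2 U^{JC}/\partial p_a\,\partial x}{\partial^2 U^{JC}/\partial p_a^2}. \]
Since $p_a^*$ is a maximizer, the denominator is negative, so the sign of $dp_a^*/dx$ equals the sign of the cross partial $\partial^2 U^{JC}/\partial p_a \partial x$. Computing these cross partials termwise from the simplified expressions in Table~3 should reproduce the monotonicity trends the paper relies on: $p_a^*$ increases in $n$, $\theta$, $g_m$, and decreases in $\pi_c$ and $p_e$.

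With the monotone dependencies in hand, the "worst-case" parameter vector that minimizes $p_a^*$ is obtained by pushing each parameter to the admissible extreme that pulls $p_a^*$ downward, namely $n=1$, $\theta=0$, $g_m=0$, $p_e=1$, and $\pi_c$ at its upper feasible limit consistent with the constraints in Table~2. Any deviation in any single parameter moves $p_a^*$ upward by strict monotonicity, establishing that this configuration is indeed the minimizer. Since the contract owns $n$ and $\theta$, the platform can raise them above their worst-case values to enforce any desired lower bound on $p_a^*$, which is what Figure~4 illustrates numerically ($n{=}1 \Rightarrow p_a^*{=}0.5$, $n{=}4 \Rightarrow p_a^*{=}0.943$).

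The main obstacle I expect is the boundary case: if for some parameter regime the interior first-order condition has no root in $(0,1)$, the implicit function theorem argument fails and the optimum sits at $p_a=0$ or $p_a=1$. I would handle this by examining the sign of $\partial U^{JC}/\partial p_a$ at the endpoints directly and verifying that the constraints of Table~2 (especially $c_e < \pi_c$ and $\hat\pi_c \geq \pi_c$) keep the relevant zero-crossing interior; the monotonicity conclusion then extends to the boundary regime by a standard envelope argument. A secondary subtlety is that $\pi_c$ is bounded above by $\hat\pi_c$ via feasibility, so the "decrease up to limit" caveat in Table~6 must be reflected when asserting monotonicity in $\pi_c$, but this does not affect the platform-controlled parameters $n$ and $\theta$.
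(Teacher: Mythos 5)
Your proposal is correct and takes essentially the same route as the paper's proof: form the JC's total expected utility $U^{JC}$ as the convex combination over the four action pairs, differentiate in $p_a$, set the first-order condition, determine how each parameter ($n,\theta,g_m,\pi_c,p_e$) shifts the optimizer, select the worst-case values $g_m=0$, $p_e=1$, $n=1$, $\theta=0$, and conclude that the platform's control of $n$ and $\theta$ enforces a lower bound on the optimal $p_a$. The only difference is presentational: you formalize the comparative statics via the implicit function theorem with concavity and boundary checks, whereas the paper argues the same monotonicities informally by rearranging the stationarity condition into $\frac{2\pi_c-g_m}{\pi_c(n+\theta+1)} = 1-p_a^n+\frac{n p_a^{n-1}}{p_e}-n p_a^n$ and inspecting each side (supplemented by the numerical plot for the dependence on $n$).
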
{}

\begin{proof}
The system must bound $p_a$ for any set of values in the system, thus we use the JC's total expected utility. 
    \begin{equation}
        \begin{alignedat}{2}
        U^{JC} = & p_vp_eU_{EV}^{JC} + p_v(1-p_e) U_{DV}^{JC} + \\
                 & (1-p_v) p_e U_{EP}^{JC} + (1-p_v) (1-p_e) U_{DP}^{JC} \\
        U^{JC} = & b \sigma_{e} - c_{v} \sigma_{v} - g_{j} + g_{m} \sigma_{v} \left(p_{a} \sigma_{e} - 1\right) + \\
                 &\pi_{c} \sigma_{v} \left(n + \theta + 1\right) \left(- p_{a} p_{a}^{n} \sigma_{e} + p_{a} \sigma_{e} + p_{a}^{n} - 1\right) + \\
                 & 2 \pi_{c} \sigma_{v} - \pi_{c} - 2 p_{a} \pi_{c} \sigma_{e} \sigma_{v} - \pi_{a} 
            \end{alignedat}{}
        \label{eq:adJU}
    \end{equation}{}
take the derivative with respect to $p_a$:
    \begin{equation}
        \begin{alignedat}{2}
            \frac{\partial}{\partial p_a}U^{JC} = & g_{m} \sigma_{e} \sigma_{v} - 2 \pi_{c} \sigma_{e} \sigma_{v} + \\
            & \pi_{c} \sigma_{v} \left(n + \theta + 1\right) \left(- n p_{a}^{n} \sigma_{e} + \frac{n p_{a}^{n}}{p_{a}} - p_{a}^{n} \sigma_{e} + \sigma_{e}\right)
        \end{alignedat}
    \end{equation}{}
        
Set equal to 0 and simplify:
    \begin{equation}
        \begin{alignedat}{2}
            \frac{2 \pi_{c} -g_{m}}{\pi_{c}(n + \theta + 1)} =  1 - p_{a}^{n} + \frac{n p_{a}^{n-1}}{p_e} - n p_{a}^{n}
        \end{alignedat}
        \label{aeq:BPA}
    \end{equation}{}

    \begin{figure}[tb]
        \centering
        \resizebox{\columnwidth}{!}
        {
            \input{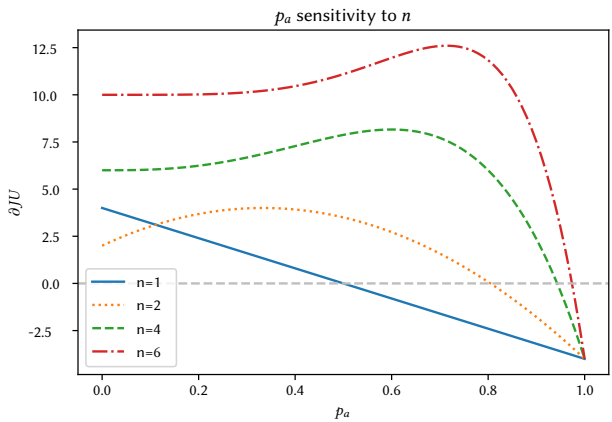}
        }
        \caption{In this plot we vary the value of $n$ and plot $p_a$ against $\frac{\partial U^{JC}}{\partial p_a}$. This shows that as $n$ increases so does the optimal value of $p_a$. For this plot $\pi_{c}=2$, $g_{m}=0$, $\theta=0$, $c_v=1$, $b=4$, $p_e=1$ }
        \label{fig:dJUvsPa}
    \end{figure}

We now examine each term in \cref{aeq:BPA} to determine which values will minimize $p_{a}$.
\begin{itemize}
    \item If $g_{m}$ increases, then left-hand side decreases, meaning $p_a$ will increase.
    \item If $p_e$ increases, then right-hand side decreases, meaning $p_a$ will increase.
    \item $\pi_c$ has no impact on $p_a$ when $g_{m}=0$. If $g_{m}\neq0$, then when $\pi_c$ increases, the left-hand side (lhs) increases and approaches $\frac{2}{\theta+n+1}$ (approaching parity with $g_{m}=0$), then the right-hand side (rhs) must also increase meaning that $p_a$ must decrease. 
    \item If $\theta$ increases, lhs decreases meaning $p_a$ will increase.
    \item If $n$ increases $p_a$ increases, see where the curves cross $0$ in \cref{fig:dJUvsPa}.
\end{itemize}

By assuming a value for each parameter such that any change results in $p_a$ increases we can determine the worst case for the system. Specifically, if the parameter and $p_a$ are inversely related set the parameter to its maximum allowed value, and if they are directly related set the parameter to its minimum value. Thus, the worst case values for each of the parameters are: $g_{m}=0$, $p_{e}=1$, $\theta=0$, $n=1$. The plot in \cref{fig:dJUvsPa1} when $n=1$ uses those parameters and shows that increasing $n$ does cause the optimal value for $p_a$ to increase. We see that when $n=1$, the optimal $p_a=0.5$; and when $n=4$, the optimal $p_a=0.943$. Thus, we see that setting $n$ can set a lower bound on the optimal $p_a$. Parameter $\theta$ also can adjust the lower bound on the optimal value for $p_a$.
    
\end{proof}

\begin{theorem}[JC type 1]  If the JC is type 1, it will always verify ($p_v=1$) since $c_v$ is sufficiently low. This results in a pure strategy equilibrium <execute,verify>.
\label{athm:T1}
\end{theorem}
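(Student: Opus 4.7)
The plan is to verify that the strategy profile $\langle \text{execute}, \text{verify}\rangle$ is a mutual best response, which is the definition of a (pure strategy) Nash equilibrium. By the defining property of a Type~1 JC, both inequalities $U_{EV}^{JC}>U_{EP}^{JC}$ and $U_{DV}^{JC}>U_{DP}^{JC}$ hold. Equivalently, reading off the simplified payoff table (Table~\ref{tab:JCDS}), this means
\[
c_v < \min\!\Bigl\{(1-p_a)(2\pi_c - g_m) - \pi_c(1-p_a)(1-p_a^n)(n+\theta+1),\; 2\pi_c - g_m - \pi_c(1-p_a^n)(n+\theta+1)\Bigr\}.
\]
Hence \emph{verify} strictly dominates \emph{pass} for the JC, and so the rational choice is $p_v=1$, establishing the first half of the claim.

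Next I would show that, given $p_v=1$, the RP's strict best response is to \emph{execute}. This amounts to $U_{EV}^{RP}>U_{DV}^{RP}$, which by Table~\ref{tab:RPDS} reduces to the inequality $p_a^{n+1}\pi_c(n+\theta+1)>c_e-c_d$ that appears in the proof of Theorem~\ref{athm:pe}. Using the standing constraints $c_e<\pi_c$ (Constraint~5) together with the condition $p_a^{n+1}>\tfrac{1}{2}$, which Theorem~\ref{athm:bpa} guarantees the platform can enforce by appropriate choice of $n$ and $\theta$, the right-hand side is bounded by $\pi_c$ while the left-hand side exceeds $2\pi_c \cdot \tfrac{1}{2}=\pi_c$; thus the inequality is strict and \emph{execute} strictly dominates \emph{deceive} against $p_v=1$. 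So $p_e=1$ is the unique best response.

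Combining both directions, $\langle \text{execute}, \text{verify}\rangle$ is a mutual strict best response, so it is the unique pure strategy Nash equilibrium of the induced subgame for a Type~1 JC.

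The main obstacle I anticipate is not the algebra itself but making sure the two conditions used are actually compatible: the Type~1 upper bound on $c_v$ must hold simultaneously with the lower bound $p_a^{n+1}>\tfrac{1}{2}$ needed to invoke Theorem~\ref{athm:pe}. Since $p_a$ is chosen by the JC, whereas $c_v$ is a property of the verification algorithm, I would briefly argue that Theorem~\ref{athm:bpa} lets the platform set $n,\theta$ so that $p_a$ is driven sufficiently close to $1$ (e.g.\ $p_a\geq 0.99$), at which point the Type~1 threshold on $c_v$ simplifies and the two conditions are seen to be non-conflicting for any reasonable parameter regime. No mixing arguments or fixed-point machinery are required.
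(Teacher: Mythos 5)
Your proposal is correct and follows essentially the same route as the paper: Type~1 means verify strictly dominates pass (so $p_v=1$), then the bound $p_a^{n+1}>\tfrac{1}{2}$ from Theorem~\ref{athm:bpa} together with Theorem~\ref{athm:pe} gives that the RP's best response is to execute, yielding the pure strategy equilibrium $\langle\text{execute},\text{verify}\rangle$. Your added discussion of the compatibility of the $c_v$ threshold with the $p_a$ bound is a reasonable elaboration but not a departure from the paper's argument.
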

\begin{proof}
    Type 1 means that $U_{EV}^{JC}>U_{EP}^{JC}$ and $U_{DV}^{JC}>U_{DP}^{JC}$ so verify strictly dominates passing for the JC. From \cref{athm:bpa} we know that $p_{a}^{n+1}>\frac{1}{2}$ and from \cref{athm:pe} we know that if $p_{a}^{n+1}>\frac{1}{2}$ and the JC verifies then the RP will execute thus <execute,verify> is a pure strategy equilibrium. 
\end{proof}

\begin{theorem}[JC type 2]  If the JC is type 2, it results in two pure strategy equilibria <execute,verify>, <deceive,pass>, and one mixed strategy Nash equilibrium where the JC randomly mixes between verifying and passing.
\end{theorem}
\begin{proof}
    Type 2 means that verify dominates when the RP executes ($U_{EV}^{JC}>U_{EP}^{JC}$) and pass dominates when RP deceives ($U_{DV}^{JC}<U_{DP}^{JC}$). Following \cref{athm:T1} we know that <execute,verify> is a pure strategy equilibrium if the JC always verifies and since $U_{EV}^{JC}>U_{EP}^{JC}$ the JC will not choose to pass. 
    
    <deceive,pass> is a pure strategy equilibrium because $U_{DV}^{JC}<U_{DP}^{JC}$ so the JC will not change to verify, and RP will not change to execute because $U_{EP}^{RP}<U_{DP}^{RP}$ is always true.
    
    The JC will only verify all jobs if the cost of verification $c_v$ is sufficiently low ($0$). Otherwise it may get better utility by adopting a mixed strategy, choosing to verify with probability $p_v$ and choosing to pass with probability $1-p_v$. However, if the JC passes, the RP will prefer to deceive (assuming that $p_a^{n+1}>\frac{1}{2}$) and so will also choose to adopt a mixed strategy where it executes with probability $p_{e}$ and deceives with probability $1-p_{e}$.

\end{proof}{}

\begin{theorem}[JC type 3]  If the JC is type 3, it will result in a Nash equilibrium where the JC randomly mixes between verifying and passing, and the RP mixes between executing and deceiving.
\end{theorem}
\begin{proof}
    Type 3 means that pass dominates when the RP executes ($U_{EV}^{JC}<U_{EP}^{JC}$) and verify dominates when RP deceives ($U_{DV}^{JC}>U_{DP}^{JC}$).
    Assume the JC will verify and the RP will execute: <execute,verify>. Then since $U_{EP}^{JC}>U_{EV}^{JC}$ the JC will transition to pass: <execute,pass>. Now since $U_{EP}^{RP}<U_{DP}^{RP}$ is always true the RP will transition to deceive: <deceive,pass>. Then the JC will transition to verify since $U_{DV}^{JC}>U_{DP}^{JC}$: <deceive,verify>. And finally from \cref{athm:pe} the RP will transition to execute:  <execute,verify>. Thus, we return to the initial condition and there is not pure strategy equilibrium. From \cite{NashNon1951} we know that all finite games have an equilibrium and so there must exist a mixed strategy equilibrium. 
\end{proof}{}

\section{Smart Contract Functions}\label{app:SCFunc}

\cref{tab:functions} shows the different smart contract functions and the actor who calls them. the first two functions are called during deployment of the smart contract and these have no subscriber available (N/A).

\section{Function Gas Costs}\label{app:gascost}


\begin{table*}[t]
    \centering
    \caption{Functions provided by the \platform Smart Contract and their corresponding events, as well as the agents who call the functions and the agents that subscribe to those events. In the \emph{Caller} and \emph{Subscriber} columns, mediator is denoted M, Resource Provider is denoted RP, Job Creator is denoted JC, solver is denoted as S.}
    \label{tab:functions}
    
\begin{tabulary}{\textwidth}{|J|J|J|J|}
    \hline
    
    Function & Event & Caller & Subscriber \\
    \hline \texttt{setPenaltyRate}                          & \texttt{PenaltyRateSet}                            & Deployment  & N/A        \\
    \hline \texttt{setReactionDeadline}                     & \texttt{ReactionDeadlineSet}                       & Deployment  & N/A        \\
    \hline \texttt{registerMediator}                        & \texttt{MediatorRegistered}                        & M           & M,S,JC,RP \\
    \hline \texttt{mediatorAddTrustedDirectory}             & \texttt{MediatorAddedTrustedDirectory}             & M           & M,S       \\
    \hline \texttt{mediatorAddSupportedFirstLayer}          & \texttt{MediatorAddedSupportedFirstLayer}          & M           & M,S       \\
    \hline \texttt{registerResourceProvider}                & \texttt{ResourceProviderRegistered}                & RP          & RP,S      \\
    \hline \texttt{resourceProviderAddTrustedMediator}      & \texttt{ResourceProviderAddedTrustedMediator}      & RP          & RP,S      \\
    \hline \texttt{resourceProviderAddTrustedDirectory}     & \texttt{ResourceProviderAddedTrustedDirectory}     & RP          & RP,S      \\
    \hline \texttt{resourceProviderAddSupportedFirstLayer}  & \texttt{ResourceProviderAddedSupportedFirstLayer}  & RP          & RP,S      \\
    \hline \texttt{registerJobCreator}                      & \texttt{JobCreatorRegistered}                      & JC          & JC,S      \\
    \hline \texttt{jobCreatorAddTrustedMediator}            & \texttt{JobCreatorAddedTrustedMediator}            & JC          & JC,S      \\
    \hline \texttt{postResOffer}                            & \texttt{ResourceOfferPosted}                       & RP          & RP,S      \\
    \hline \texttt{postJobOffer}                            & \texttt{JobOfferPosted}                            & JC          & JC,S      \\
    \hline \texttt{cancelJobOffer}                          & \texttt{JobOfferCanceled}                          & JC          & JC,S      \\
    \hline \texttt{cancelResOffer}                          & \texttt{ResourceOfferCanceled}                     & RP          & RP,S      \\
    \hline \texttt{postMatch}                               & \texttt{Matched}                                   & S           & S,RP      \\
    \hline \texttt{postResult}                              & \texttt{ResultPosted}                              & RP          & RP,JC     \\
    \hline \texttt{rejectResult}                            & \texttt{ResultReaction,JobAssignedForMediation}    & JC          & JC,M,RP   \\
    \hline \texttt{acceptResult }                           & \texttt{ResultReaction,MatchClosed}                & JC          & JC,RP     \\
    \hline \texttt{postMediationResult}                     & \texttt{MediationResultPosted,MatchClosed}         & M           & M,JC,RP   \\
    \hline \texttt{timeout}                                 & \texttt{MatchClosed}                               & JC          & JC,RP     \\
    
    \hline
    
\end{tabulary}

\end{table*}


\begin{table*}[!ht]
\caption{Gas cost for each platform function call, measured upon receipt of event emitted by that function. gasEstimate is the value returned when calling the eth\_estimateGas method}
\label{tab:gas_cost}
\begin{tabular}{@{}lrrr@{}}
\toprule
Function                 & mean     & std.dev. & gasEstimate \\ \midrule
setPenaltyRate           &  43,400  &  0      &   22,100     \\        
setReactionDeadline      &  43,100  &  0      &   21,600     \\        
acceptResult             &  140,000 &  64,700  &   62,900     \\        
postResult               &  91,000  &  16,000  &   225,000    \\            
postMatch                &  35,000  &  1,700   &   127,000    \\            
postJobOfferPartOne      &  236,000 &  4,590   &   228,000    \\            
postJobOfferPartTwo      &  216,000 &  0      &   227,000    \\            
postResOffer             &  242,000 &  2,160   &   232,000    \\            
registerMediator         &  137,000 &  0      &   85,100     \\        
registerResourceProvide, &  81,200  &  0      &   25,400     \\        
resourceProviderAddTM    &  65,000  &  0      &   42,300     \\        
registerJobCreator       &  34,000  &  0      &   3,200      \\        
jobCreatorAddTM          &  65,300  &  0      &   42,600     \\        
postMediationResult      &  187,000 &  15,600  &   240,000    \\            
rejectResult             &  53,900  &  9      &   47,100     \\        
nominalcost              &  592,000 &         &             \\  \bottomrule  
\end{tabular}
\end{table*}

To measure the minimum cost of executing a job via \platform, we had a single JC submit 100 jobs and measured the function gas costs and call times independent of the job that was being executed. \cref{tab:gas_cost} shows the cost measured for each function call. The \texttt{postJobOffer} function was split into two parts, \texttt{postJobOfferPart1} and \texttt{postJobOfferPart2} to deal with stack too deep issues. The cost accrued when there are no deviations from the protcol is the cost to post the job and accept the result. This is the nominal cost of operation and is : 
\begin{gather}
    \begin{aligned}
        \text{nominal cost} = & \texttt{postjobOfferPartOne} + \texttt{postjobOfferPartTwo} + \\ 
                              & \texttt{accept\-Result} \\
        \text{nominal cost} = & 2.36E5 + 2.16E5 + 1.40E5 \\
        \text{nominal cost} = & 5.92E5
    \end{aligned}
    \raisetag{20pt}
\end{gather}


    \end{appendices}
\fi

\end{document}